\documentclass[a4paper]{article}
\usepackage{microtype}
\usepackage{a4wide}
\usepackage{enumerate}
\usepackage{tikz}
\usepackage{xspace}
\usetikzlibrary{decorations.pathreplacing,decorations.pathmorphing,patterns,calc,arrows}
\usepackage{url}

\usepackage{float,todonotes,eucal}
\usepackage{amsmath,amsfonts,amssymb,epsfig,color,amsthm}

\usepackage{thmtools, thm-restate}
\newtheorem{theorem}{Theorem}
\newtheorem{corollary}[theorem]{Corollary}
\newtheorem{lemma}[theorem]{Lemma}

\newtheorem{claim}{Claim}
\newtheorem{invariant}{Invariant}
\usepackage[absolute]{textpos}

\usepackage{xcolor}
\definecolor{ForestGreen}{rgb}{0.1333,0.5451,0.1333}
\usepackage[
  pagebackref,
  colorlinks,
  citecolor=ForestGreen,
]{hyperref}

\newcommand{\ceil}[1]{\ensuremath{\left\lceil{#1}\right\rceil}}%

\newcommand{\ProblemFormat}[1]{{\sc #1}}
\newcommand{\ProblemName}[1]{\ProblemFormat{#1}\xspace}

\newcommand{\probLECMul}{\ProblemName{List Edge Coloring in Multigraphs}}
\newcommand{\probLECSim}{\ProblemName{List Edge Coloring in Simple Graphs}}
\newcommand{\probEC}{\ProblemName{Edge Coloring}}
\newcommand{\probLEC}{\ProblemName{List Edge Coloring}}

\newcommand{\probTSAT}{\ProblemName{$3$-SAT}}
\newcommand{\probTFSAT}{\ProblemName{$(3,\!4)$-SAT}}

\newcommand{\vrb}{{\ensuremath{\rm{vrb}}}}
\newcommand{\cls}{{\ensuremath{\rm{cls}}}}

\newcommand{\Cc}{{\ensuremath{\mathcal{C}}}}

\title{Tight Lower Bounds for List Edge Coloring\footnote{Work supported by the National Science Centre of Poland, grant number 2015/17/N/ST6/01224 (AS). The work of {\L}. Kowalik is a part of the project TOTAL that has received funding from the European Research Council (ERC) under the European Union’s Horizon 2020 research and innovation programme (grant agreement No 677651). }} 

\author{\L ukasz Kowalik\thanks{Institute of Informatics, Faculty of Mathematics, Informatics and Mechanics, University of Warsaw, Poland.}  \and Arkadiusz Soca\l a\footnotemark[2]}
\date{}





\begin{document}

\maketitle
\begin{textblock}{20}(0.7, 12.98)
	\includegraphics[width=40px]{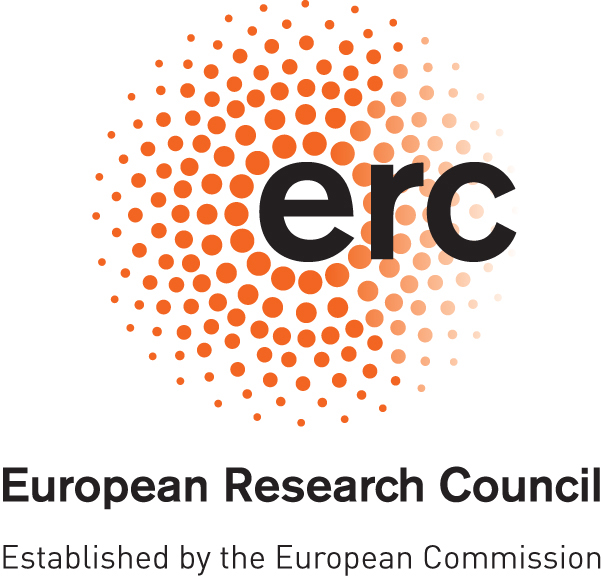}%
\end{textblock}
\begin{textblock}{20}(0.46, 13.28)
	\includegraphics[width=60px]{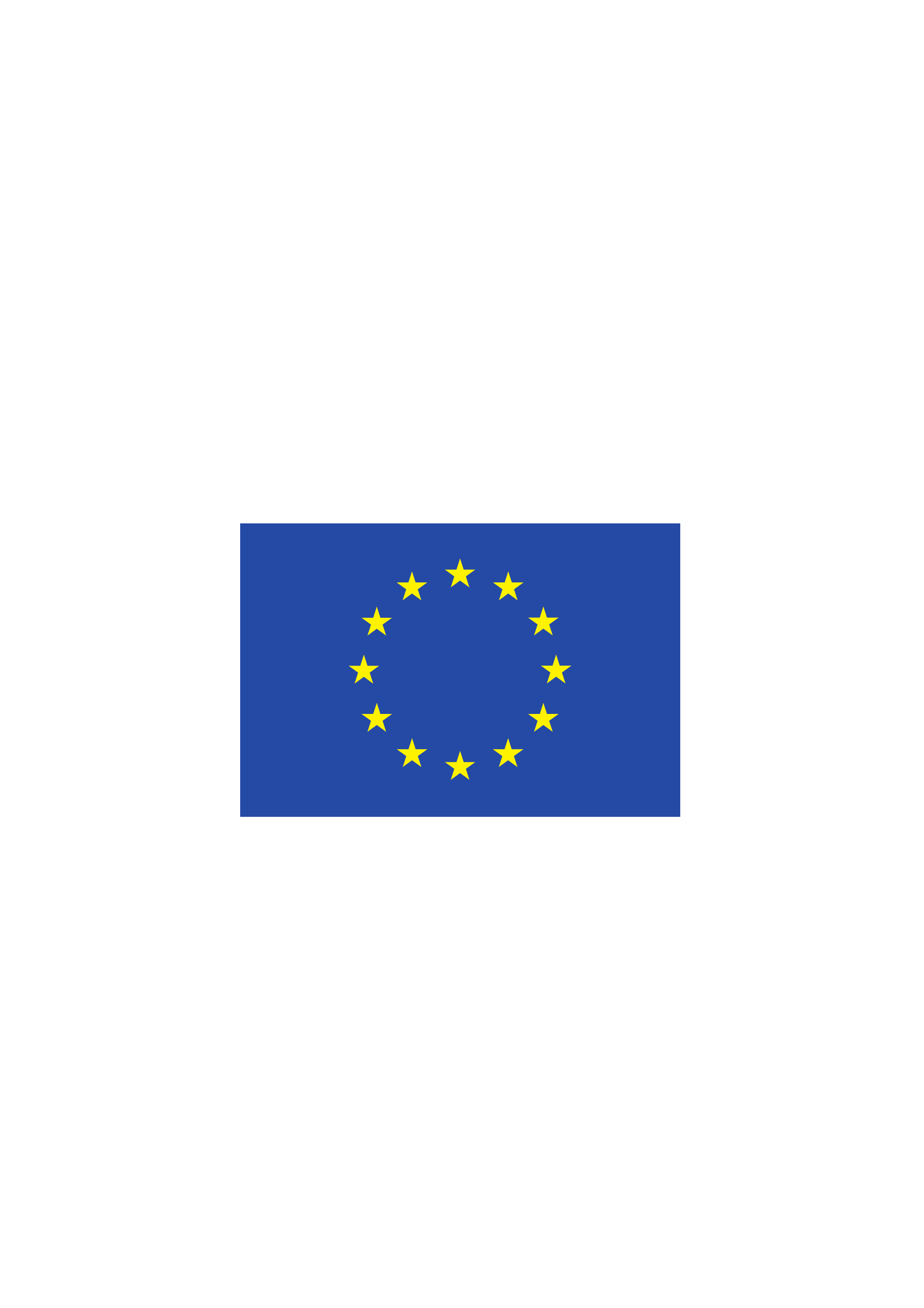}%
\end{textblock}

\begin{abstract}
The fastest algorithms for edge coloring run in time $2^m n^{O(1)}$, where $m$ and $n$ are the number of edges and vertices of the input graph, respectively.
For dense graphs, this bound becomes $2^{\Theta(n^2)}$.
This is a somewhat unique situation, since most of the studied graph problems admit algorithms running in time $2^{O(n\log n)}$.
It is a notorious open problem to either show an algorithm for edge coloring running in time $2^{o(n^2)}$ or to refute it, assuming Exponential Time Hypothesis (ETH) or other well established assumption.

We notice that the same question can be asked for list edge coloring, a well-studied generalization of edge coloring where every edge comes with a set (often called a {\em list}) of allowed colors.
Our main result states that list edge coloring for simple graphs does not admit an algorithm running in time $2^{o(n^2)}$, unless ETH fails. 
Interestingly, the algorithm for edge coloring running in time $2^m n^{O(1)}$ generalizes to the list version without any asymptotic slow-down.
Thus, our lower bound is essentially tight. 
This also means that in order to design an algorithm running in time $2^{o(n^2)}$ for edge coloring, one has to exploit its special features compared to the list version.
\end{abstract}

\section{Introduction}\label{sect:intro}

An edge coloring of a graph $G=(V,E)$ is a function $c:E\rightarrow\mathbb{N}$ which has different values (called colors) on incident edges.
This is one of the most basic graph concepts with plethora of results, including classical theorems of Vizing, Shannon and K\H{o}nig.
In the decision problem \probEC we are given a simple graph $G$ and an integer $k$.
We ask if $G$ can be edge colored using only $k$ colors.
This is an NP-complete problem, as shown by Holyer~\cite{holyer}, similarly as many other natural graph decision problems like {\sc Clique}, {\sc Vertex Coloring}, {\sc Hamiltonicity} or {\sc Subgraph Isomorphism}.
However, there is an intriguing difference between our understanding of \probEC and most of the studied graph problems, including the four ones mentioned above.
Namely, the latter ones admit algorithms running in time $2^{O(n\log n)}$, and often even $2^{O(n)}$ for an $n$-vertex input graph, while it is not known whether \probEC can be solved in time $2^{o(n^2)}$.
Indeed, the fastest known algorithm for edge coloring is obtained by applying the vertex coloring algorithm of Bj\"{o}rklund, Husfeldt and Koivisto~\cite{BjorklundHK09} to the line graph of the input graph. As a result, we get an edge coloring algorithm which, for any graph with $m$ edges and $n$ vertices, runs in time $2^{m}n^{O(1)}$ and exponential space, which is $2^{\Theta(n^2)}$ for dense graphs.
The only progress towards a tailor-made approach for edge coloring is the more recent algorithm of Bj\"{o}rklund, Husfeldt, Kaski and Koivisto~\cite{bhkk:narrow} which still runs in time  $2^{m}n^{O(1)}$ but uses only polynomial space.
In this context it is natural to ask for a lower bound.
Clearly, any superpolynomial lower bound would imply P$\ne$NP.
However, a more feasible goal is to prove a meaningful lower bound under the assumption of a well established conjecture, like Exponential Time Hypothesis (ETH, see Section~\ref{sect:pre} for a precise formulation).
The reduction of Holyer, combined with standard tools (see Section~\ref{sect:pre}) proves that \probEC does not admit an algorithm in time $2^{o(m)}$ or $2^{o(n)}$ .
At the open problem session of Dagstuhl Seminar 08431 in 2008~\cite{dagstuhl08} it was asked to exclude $2^{O(n)}$ algorithms, assuming ETH.
Despite considerable progress in ETH-based lower bounds in recent years~\cite{CyganFGKMPS17,CyganPP16,marx-beat} this problem stays unsolved~\cite{dagstuhl17}.

List edge coloring is a generalization of edge coloring. 
An \emph{edge list assignment} $L:E(G)\rightarrow 2^\mathbb{N}$ is a function that assigns to each edge $e$ of $G$ a set (often called a \emph{list}) $L(e)$ of allowed colors.
A function $c:E(G) \rightarrow \mathbb{N}$ is a \emph{list edge coloring} of $(G,L)$ if $c(e) \in L(e)$ for every $e \in E(G)$, and $c(e) \neq c(f)$ for every pair of incident edges $e, f \in E(G)$.
The notion of list edge coloring is also a frequent topic of research. 
For example, it is conjectured that if $G$ can be edge colored in $k$ colors for some $k$, then it can be list edge colored for any edge list assignment with all lists of size at least $k$.
This conjecture has been proved in some classes of graphs like bipartite graphs~\cite{galvin} or planar graphs of maximum degree at least 12~\cite{boro:edge-list}.

In this work, we study the computational complexity of list edge coloring.
The basic decision problem, \probLECSim, asks if for a given simple graph $G$ with edge list assignment $L$ there is a list edge coloring of $(G,L)$.
Its more general variant, called \probLECMul asks the same question but the input graph does not need to be simple, i.e., it can contain parallel edges.
Although the problem seems much more general than \probEC, the two best known algorithms~\cite{BjorklundHK09,bhkk:narrow} that decide if a given graph admits an edge coloring in $k$ colors solve \probLECMul (and hence also \probLECSim) within the same time bound, i.e.,
$2^{m}m^{O(1)} + O(L)$, where $L$ is the total length of all lists, 
after only minor modifications (see Proposition 3 in~\cite{BjorklundHK09}).
Multigraphs do not admit any upper bound on the number of edges, hence this time complexity does not translate to a function on $n$.
We show that this is not an accident, because satisfiability of any sufficiently sparse 3-CNF-SAT formula can be efficiently encoded as a list edge coloring instance with a bounded number of vertices.
This gives the following result.

\begin{restatable}{theorem}{thmulti}
\label{th:multi}%
If there is a function $f:\mathbb{N}\rightarrow\mathbb{N}$ such that \probLECMul can be solved in time~$f(n)\cdot m^{O(1)}$ for any input graph on $n$ vertices and $m$ edges, then $P=NP$.
\end{restatable}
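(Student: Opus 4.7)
The plan is to give a polynomial-time reduction from \probTSAT (or a sparser NP-hard variant such as \probTFSAT) to \probLECMul that produces instances whose multigraph has a \emph{constant} number of vertices, while the number of edges stays polynomial in the size of the input formula. Granting such a reduction, applying the hypothetical $f(n)\cdot m^{O(1)}$ algorithm to the output multigraph would decide any instance of \probTSAT in time $f(O(1))\cdot \poly(|\phi|)$, which is polynomial and hence yields $P=NP$.

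To build the reduction, I would fix a small vertex set, say three vertices $u,v,w$. Each variable $x_i$ is represented by one or more parallel \emph{variable edges} between $u$ and $v$, each with a two-element list $\{T_i^{(k)},F_i^{(k)}\}$ of private colors encoding the truth value of that particular occurrence. A small \emph{consistency gadget} consisting of auxiliary edges placed on the $uw$ and $vw$ sides with carefully chosen two-element lists forces all occurrences of a variable to agree, by arranging that the gadget has no valid coloring exactly when two consecutive occurrences receive inconsistent truth values. Each clause is encoded by a \emph{clause gadget} with one literal edge per literal together with a ``selector'' edge carrying a three-element list; the lists are designed so that the gadget admits a list coloring iff at least one literal is satisfied by the variable edges. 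Different variables and different clauses use disjoint private color palettes, so the only cross-gadget interaction is through colors shared with the variable edges.

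Correctness proceeds in two directions: given a satisfying assignment, color each variable edge according to its truth value, extend the coloring through the consistency gadgets, and for each clause pick a satisfied literal and use its ``released'' color to color the selector edge; conversely, any list edge coloring of the whole instance induces a well-defined truth assignment on the variables via the colors chosen on the variable edges---well-defined precisely because of the consistency gadgets---and the fact that every clause gadget is colored forces at least one literal per clause to be satisfied.

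The main obstacle is designing the gadgets while keeping the number of vertices bounded by a constant: at every vertex, \emph{all} incident edges must receive pairwise distinct colors, so a handful of vertices must accommodate the edges of all variables and all clauses simultaneously. Making this work requires distributing the gadget edges carefully across the few available vertex pairs---for instance, placing one half of each consistency gadget on the $uw$ side and the other half on the $vw$ side---so that at no vertex does the number of edges from a single variable's gadget exceed the size of its private color palette, and tuning the two-element lists so that the only globally unavoidable conflicts correspond exactly to variables with inconsistent truth assignments or clauses with no satisfied literal.
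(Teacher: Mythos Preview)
Your high-level plan is exactly right and matches the paper's: the theorem follows once you have a polynomial-time reduction from an \NP-hard problem to \probLECMul whose output has only $O(1)$ vertices. However, the proposal stops precisely where the work begins. You correctly name the main obstacle---making all the gadgets coexist at a bounded number of vertices without creating unintended conflicts---but you do not resolve it. The phrases ``carefully chosen two-element lists'' and ``the lists are designed so that'' stand in for the actual construction. With only three vertices and per-occurrence private colors, the consistency edges and the clause selectors for \emph{all} variables and clauses pile up on the same one or two vertex pairs and are therefore mutually incident; a selector edge with list $\{B_1,B_2,B_3\}$ will typically share a color with some consistency edge of the same variable, and you have not shown that this never produces a spurious conflict (or, conversely, a spurious escape). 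Verifying this global consistency is the entire content of the proof, not a detail to be deferred.

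The paper's construction is different and sidesteps this obstacle. Instead of per-occurrence private colors plus explicit consistency gadgets, it uses only two colors $x_i,\neg x_i$ per variable and $21$ vertices $v_0,\ldots,v_{20}$ placed on a path. Between every pair $v_{2r+1},v_{2r+2}$ there is exactly one edge with list $\{x_i,\neg x_i\}$ for each $i$; this forces, in any list edge coloring, the edges colored $x_i$ or $\neg x_i$ to form a single alternating path $v_0,\ldots,v_{20}$, so consistency across occurrences is automatic and no dedicated consistency gadget is needed. Clauses are first partitioned, via a greedy $10$-coloring of the clause-intersection graph of the \probTFSAT instance, into groups $\Cc_0,\ldots,\Cc_9$ in which no variable repeats; the clauses of $\Cc_r$ are then encoded between $v_{2r}$ and $v_{2r+1}$ by one edge with list equal to the clause and two further edges with the six literals of its variables. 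Because the groups are independent, distinct clauses at the same vertex pair use disjoint color palettes, so there is nothing to disentangle. The cost is $21$ vertices instead of $3$, which is immaterial for the theorem but buys a clean, verifiable construction.
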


For simple graphs $m=O(n^2)$ and hence \probLECSim admits an algorithm running in time $2^{O(n^2)}$.
Our main result states that this bound is essentially optimal, assuming ETH.

\begin{restatable}{theorem}{thsimple}
\label{th:simple}%
If there is an algorithm for \probLECSim that runs in time~$2^{o(n^2)}$, then Exponential Time Hypothesis fails.
\end{restatable}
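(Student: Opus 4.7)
The strategy is to reduce from 3-SAT to \probLECSim in such a way that an $N$-variable 3-SAT instance becomes an instance on a simple graph with only $n = O(\sqrt{N})$ vertices. By the Sparsification Lemma we may assume the input formula $\varphi$ has $N$ variables and $M = O(N)$ clauses, and under ETH no $2^{o(N)}$-time algorithm exists for such instances. Hence any polynomial-time reduction of this kind would turn a hypothetical $2^{o(n^2)}$-time algorithm for \probLECSim into a $2^{o(N)}$-time algorithm for 3-SAT, contradicting ETH.

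The heart of the construction is a \emph{square-root compression} of the variables. I would partition $\{x_1,\ldots,x_N\}$ into $k = \lceil\sqrt{N}\rceil$ blocks $B_1,\ldots,B_k$ of size at most $\sqrt{N}$, introduce one \emph{block vertex} $v_i$ per block, and encode a truth assignment to the variables of $B_i$ by the colors on $|B_i|$ incident \emph{variable edges} at $v_i$. Each such edge would carry a two-element list of block-specific ``false/true'' colors; since these color symbols are pairwise distinct across edges incident to $v_i$, all $2^{|B_i|}$ local assignments yield legal colorings, and the $k = O(\sqrt{N})$ block vertices suffice to store a full truth assignment to $\varphi$ within the desired vertex budget.

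The harder task is enforcing the $M = O(N)$ clauses using only $O(\sqrt{N})$ additional vertices. The natural approach is to amortize: group the clauses into $\Theta(\sqrt{N})$ clause-groups of $O(\sqrt{N})$ clauses each, and design, for every group, a gadget on $O(1)$ auxiliary vertices. Such a gadget should (i) propagate the literal-encoding colors from the relevant block vertices to a small set of \emph{clause-group vertices}, (ii) use carefully crafted lists so that the partial coloring extends to a proper list edge coloring exactly when every clause of the group is satisfied, and (iii) introduce no parallel edges despite the same variable occurring in many different clauses. A single auxiliary vertex of degree $\Theta(\sqrt{N})$ with rich lists has enough ``bandwidth'' (roughly $2^{\Theta(\sqrt{N}\log N)}$ distinct multisets of colors on its incident edges) to simultaneously enforce $\Theta(\sqrt{N})$ three-literal clauses, although the concrete list assignment is nontrivial.

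Designing these clause-group gadgets, and proving that the resulting instance admits a list edge coloring if and only if $\varphi$ is satisfiable, will be the main technical obstacle: we must simulate $\Theta(N)$ 3-clauses with $O(\sqrt{N})$ vertices, route literal colors across distant blocks, and keep the graph simple. Once the gadgets are in place, the parameter bookkeeping is routine: block vertices, the $O(1)$ auxiliary vertices used per clause-group, and any further helper vertices for the variable encoding together contribute $O(\sqrt{N})$ vertices, so $n = O(\sqrt{N})$ and a $2^{o(n^2)}$-time algorithm would decide $\varphi$ in time $2^{o(N)}$, contradicting ETH as required.
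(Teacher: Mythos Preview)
Your proposal is not a proof but a plan that explicitly defers the only hard step: you yourself say that ``designing these clause-group gadgets \ldots\ will be the main technical obstacle,'' and then do not design them. The bandwidth heuristic---that a vertex of degree $\Theta(\sqrt{N})$ admits roughly $2^{\Theta(\sqrt{N}\log N)}$ color multisets on its incident edges---is an information-theoretic count, not a construction. List edge coloring constraints are purely local (a proper coloring at each vertex, membership in each list), and it is entirely unclear how a single auxiliary vertex could enforce a \emph{disjunction} of three literals whose truth values, in your encoding, live on specific edges at three possibly distant block vertices. Nothing in the proposal explains how a literal's value is transported from its block vertex to the clause gadget without parallel edges and within the $O(\sqrt{N})$ vertex budget; that routing is precisely where the difficulty lies.

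The paper avoids this routing problem by a different encoding. Rather than storing $x_i$'s value on one edge at a block vertex, it dedicates two colors $x_i,\neg x_i$ to each variable and forces, via lists of the form $\{x_i,\neg x_i\}$ on a long chain of edges, the set $c^{-1}(\{x_i,\neg x_i\})$ to be a single alternating path through $63$ layers. The truth value of $x_i$ is the parity of this alternation, so the satisfied literal of $x_i$ is the color that appears on \emph{every} odd edge of the path---in particular, it is locally available at every layer. A clause $C=\{\ell_1,\ell_2,\ell_3\}$ is then tested by three edges meeting at one vertex with lists $\{\ell_j,c_C,d_C\}$: only two can take the auxiliary colors $c_C,d_C$, so one must be colored $\ell_j$, which the alternating-path structure allows only when $\ell_j$ is satisfied. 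The $7$-layer gadget with its ``jumping'' edges exists to reroute the two non-chosen variable paths while keeping the graph simple; this is the technical core, and your proposal contains no analogue of it.
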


Our results have twofold consequences for the \probEC problem.
First, one may hope that our reductions can inspire a reduction for \probEC.
However, it is possible that such a reduction does not exist and researchers may still try to get an algorithm for \probEC running in time $2^{o(n^2)}$.
Then we offer a simple way of verifying if a new idea works: if it applies to the list version as well, there is no hope for it.

\section{Preliminaries}\label{sect:pre}
For an integer $k$, we denote $[k]=\{0,\ldots,k-1\}$.
If $I$ and $J$ are instances of decision problems $P$ and $R$, respectively, then we say that $I$ and $J$ are {\em equivalent}
if either both $I$ and $J$ are YES-instances of respective problems, or both are NO-instances.
A clause in a CNF-formula is represented by the set of its literals.
For two subsets of vertices $A$, $B$ of a graph $G=(V,E)$ by $E(A,B)$ we denote the set of edges with one endpoint in $A$ and the other in $B$.

\paragraph*{Exponential-Time Hypothesis.}
The Exponential Time Hypothesis (ETH) of Impagliazzo et al.~\cite{eth} states that there exists a constant $c > 0$, such that there is no algorithm solving \probTSAT in time $O(2^{cn})$.
During the recent years, ETH became the central conjecture used for proving tight bounds on the complexity of various problems. 
One of the most important results connected to ETH is the {\em{Sparsification Lemma}}~\cite{seth}, which essentially gives a (many-one) reduction from an arbitrary instance of {\sc{$k$-SAT}} to an instance where
the number of clauses is linear in the number of variables.
The following well-known corollary can be derived by combining ETH with the Sparsification Lemma.

\begin{theorem}[see e.g.~Theorem 14.4 in \cite{platypus}]
\label{thm:eth-main}
Unless ETH fails, there is no algorithm for \probTSAT that runs in time $2^{o(n+m)}$, where $n,m$ denote the numbers of variables and clauses, respectively.
\end{theorem}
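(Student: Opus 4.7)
The plan is to derive Theorem~\ref{thm:eth-main} by contradiction from ETH together with the Sparsification Lemma. Suppose there is an algorithm $\mathcal{A}$ solving \probTSAT in time $2^{o(n+m)}$. Under this hypothesis I will exhibit a $2^{o(n)}$-time algorithm for \probTSAT, contradicting ETH. The role of the Sparsification Lemma is precisely to replace an input with arbitrarily large $m$ by a controlled number of instances in which $m = O(n)$, so that the ``$+m$'' inside the exponent becomes harmless.

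Concretely, I would use the following standard form of the Sparsification Lemma: for every $\epsilon > 0$ there is a constant $C_\epsilon$ and an algorithm that, given a \probTSAT instance $\varphi$ on $n$ variables, runs in $2^{\epsilon n}\poly(n)$ time and outputs at most $2^{\epsilon n}$ \probTSAT instances $\varphi_1, \ldots, \varphi_t$ on (at most) the original variables, each with at most $C_\epsilon\cdot n$ clauses, such that $\varphi$ is satisfiable if and only if some $\varphi_i$ is. Given this, the reduction is transparent: on input $\varphi$ and a target slope $c > 0$, apply sparsification with $\epsilon := c/2$, run $\mathcal{A}$ on every $\varphi_i$, and output the disjunction of the answers.

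The quantifiers have to be fixed in the right order. Given an arbitrary $c > 0$, set $C := C_{c/2}$ first, so that each $\varphi_i$ satisfies $n_i + m_i \le (C+1)n$. The hypothesis $\mathcal{A}\in 2^{o(n+m)}$ says that for every $\delta > 0$, $\mathcal{A}$ runs in at most $2^{\delta(n+m)}$ steps once $n+m$ is large enough; apply this with $\delta := c/(2(C+1))$. Then each call to $\mathcal{A}$ on some $\varphi_i$ costs at most $2^{\delta(C+1)n} = 2^{cn/2}$ steps, the number of subinstances is at most $2^{cn/2}$, and the sparsification itself costs $2^{cn/2}\poly(n)$, giving a total of $2^{cn}\poly(n)$. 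Since $c > 0$ was arbitrary, \probTSAT would lie in $\text{DTIME}(2^{o(n)})$, contradicting ETH.

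The one subtle point I expect to need care with is exactly this quantifier interleaving: the constant $C$ delivered by sparsification depends on $c$, and only after $C$ is fixed may one invoke the little-$o$ hypothesis with a sufficiently small $\delta$ calibrated to $C$. Once that ordering is respected, the $\poly(n)$ factors are absorbed into the exponent because $c > 0$ is arbitrary, and no combinatorial input beyond ETH and the Sparsification Lemma is needed.
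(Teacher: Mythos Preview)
Your argument is correct and is exactly the standard derivation the paper alludes to: the paper does not actually prove this theorem but only cites it, remarking that it ``can be derived by combining ETH with the Sparsification Lemma,'' which is precisely what you do. Your careful handling of the quantifier ordering (fixing the sparsification constant $C_\epsilon$ before choosing $\delta$) is the right way to make the little-$o$ statements interact.
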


We need the following regularization result of Tovey~\cite{Tovey84}.
Following Tovey, by \probTFSAT we call the variant of \probTSAT where each clause of the input formula contains exactly $3$ different variables, and each variable occurs in at most $4$ clauses.

\begin{lemma}[\cite{Tovey84}]
	\label{lem:tovey}
	Given a \probTSAT formula $\varphi$ with $n$ variables and $m$ clauses one can transform
	it in polynomial time into an equivalent \probTFSAT instance $\varphi'$ with $O(n+m)$ 
	variables and clauses.
\end{lemma}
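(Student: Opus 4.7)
The plan is to reduce an arbitrary \probTSAT instance $\varphi$ to \probTFSAT by a two-stage local-replacement construction.

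Stage one normalizes clauses so that each has exactly three pairwise distinct variables. Clauses with a repeated literal either become tautologies (and are discarded) or collapse to shorter clauses. Any clause shorter than three literals is padded up to length three using fresh variables: for instance, a $2$-clause $(\ell_1\vee\ell_2)$ is replaced by the pair $(\ell_1\vee\ell_2\vee u)\wedge(\ell_1\vee\ell_2\vee\neg u)$ for a new $u$, which is logically equivalent to $(\ell_1\vee\ell_2)$ and uses $u$ only twice, well within the per-variable budget of $4$.

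Stage two regularizes the occurrence count. For every variable $x$ still appearing $k>4$ times, I would introduce $k$ fresh copies $x_1,\ldots,x_k$ and replace the $i$-th occurrence of $x$ by $x_i$ with the same polarity, so that each copy appears in exactly one ``original'' clause. I would then attach a cyclic equality gadget, made of $3$-clauses over $x_1,\ldots,x_k$ together with some fresh auxiliary variables, whose conjunction forces $x_1=x_2=\cdots=x_k$ in every satisfying assignment. Each gadget adds $O(k)$ new variables and clauses, so summing over all original variables (and using $\sum_x k_x\le 3m$) yields a formula $\varphi'$ with $O(n+m)$ variables and clauses, constructible in polynomial time. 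Equivalence is then routine: a satisfying assignment of $\varphi$ lifts to $\varphi'$ by copying $x$'s value to every $x_i$ and choosing auxiliaries to satisfy the gadget, while the gadget forces every satisfying assignment of $\varphi'$ to give all copies of $x$ a common value, recovering a satisfying assignment of $\varphi$.

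The main obstacle is engineering the equality gadget so that every variable in $\varphi'$, including each copy $x_i$ and every auxiliary, occurs in at most $4$ clauses while each clause still has three pairwise distinct variables. The naive implementation of $x_i\to x_{i+1}$ by the pair $(\neg x_i\vee x_{i+1}\vee z_i)\wedge(\neg x_i\vee x_{i+1}\vee\neg z_i)$ places each $x_i$ in $1+2+2=5$ clauses (one original occurrence, two as the tail of $x_i\to x_{i+1}$, and two more as the head of $x_{i-1}\to x_i$), overshooting the budget by one. The fix, as in Tovey's original construction, is a more careful gadget in which the ``tail'' and ``head'' duties of each $x_i$ are split across distinct auxiliary variables, so that each $x_i$ participates in at most three consistency clauses and the budget is respected. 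Once the gadget is pinned down, the $O(n+m)$ size bound, polynomial running time, and equivalence all follow immediately from the preceding outline.
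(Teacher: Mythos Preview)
The paper does not prove this lemma at all: it is stated with a citation to Tovey and used as a black box. There is therefore no ``paper's own proof'' to compare against; your outline already contains strictly more detail than the paper provides.

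That said, your sketch stops short of being a self-contained argument at precisely the point you flag. The two-stage plan (normalize clause size, then split high-occurrence variables into copies linked by a cyclic equality gadget) is the standard one, and your diagnosis of the obstacle is correct: the naive padding of each implication $(\neg x_i\vee x_{i+1})$ into two $3$-clauses pushes each copy $x_i$ to five occurrences. But your proposed resolution---``a more careful gadget in which the tail and head duties of each $x_i$ are split across distinct auxiliary variables, as in Tovey's original construction''---is not a construction, it is a pointer to one. In particular, Tovey's original reduction uses genuine $2$-clauses for the cycle (so every copy sits in exactly three clauses), and the extra step of turning those $2$-clauses into exactly-$3$ clauses while keeping every variable at occurrence count at most four is not spelled out there either; it requires an additional idea that you have not supplied. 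As written, both you and the paper ultimately defer to the literature for the crucial gadget; if you want a complete proof you still owe an explicit description of the padding that respects the occurrence bound.
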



\begin{corollary}
 \label{cor:eth-3,4-sat}
Unless ETH fails, there is no algorithm for \probTFSAT that runs in time $2^{o(n)}$, where $n$ denotes the number of variables of the input formula. 
\end{corollary}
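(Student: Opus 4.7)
The plan is to derive the corollary by chaining Tovey's regularization (Lemma \ref{lem:tovey}) with the ETH-based lower bound for \probTSAT given in Theorem \ref{thm:eth-main}. Suppose, for contradiction, that there is an algorithm $A$ solving \probTFSAT in time $2^{o(n')}$, where $n'$ is the number of variables of the input \probTFSAT formula.

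Given an arbitrary \probTSAT instance $\varphi$ with $n$ variables and $m$ clauses, I would first invoke Lemma \ref{lem:tovey} to produce, in polynomial time, an equivalent \probTFSAT instance $\varphi'$ with $n'=O(n+m)$ variables (and $O(n+m)$ clauses). Then I would run $A$ on $\varphi'$, which takes time $2^{o(n')}=2^{o(n+m)}$, and return its answer; by equivalence, this correctly decides $\varphi$. The overall running time, including the polynomial overhead from the reduction, is still $2^{o(n+m)}$, contradicting Theorem \ref{thm:eth-main}.

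I do not foresee any real obstacle here: both ingredients are stated above exactly in the form needed, and composition is immediate once one observes that the bound $n'=O(n+m)$ turns a $2^{o(n')}$ algorithm into a $2^{o(n+m)}$ algorithm for \probTSAT. The only subtle point worth double-checking is that the running time hypothesis for \probTFSAT is stated purely in terms of the number of variables $n'$, with no dependence on the number of clauses $m'$; this is harmless because in a \probTFSAT formula each of the $n'$ variables appears in at most $4$ clauses and each clause contains $3$ distinct variables, so $3m'\le 4n'$ and hence the formula has total size $O(n')$. Thus no spurious $m'$ factor can sneak into the running time, and the chain of inequalities goes through as described.
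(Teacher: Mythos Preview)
Your proposal is correct and matches the paper's intent: the corollary is stated there without an explicit proof, as an immediate consequence of Theorem~\ref{thm:eth-main} and Lemma~\ref{lem:tovey}, which is exactly the chaining you carry out. Your extra remark that $m'=O(n')$ in any \probTFSAT instance is a nice sanity check but not strictly needed for the argument.
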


\section{Hardness of \probLEC in Multigraphs}\label{sect:multi}

In order to prove Theorems~\ref{th:multi} and~\ref{th:simple} we show reductions from \probTFSAT to \probLEC with strong bounds on the number of vertices in the output instance.
The basic idea of both our reductions is to use two colors, denoted by $x_i$ and $\neg x_i$ for every variable $x_i$ so that in every coloring of the out graph the edges colored in $x_i$ or $\neg x_i$ form a single path with alternating colors. 
Then colors at the edges of this path of fixed parity can encode the value of $x_i$ in a satisfying boolean assignment. 
Moreover, testing a clause $C=\ell_1\vee \ell_2\vee\ell_3$ can be done very easily: it suffices to add an edge with the list $\{\ell_1,\ell_2,\ell_3\}$.
However this edge can belong to the alternating path of at most one of the three variables in $C$, and we add two more parallel edges which become elements of the two other alternating paths.
Unfortunately, in order to get similar phenomenon in simple graphs, we need to introduce a complicated gadget.

\begin{lemma}
\label{lem:reduction-multi}
For any instance $\varphi$ of \probTFSAT with $n$ variables there is an equivalent instance $(G,L)$ of \probLECMul with 21 vertices and $O(n)$ edges.
Moreover, the instance $(G,L)$ can be constructed in polynomial time.
\end{lemma}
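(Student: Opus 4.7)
My plan is to follow the sketch that immediately precedes the lemma: introduce a palette of $2n$ colors $\{x_i, \neg x_i : i \in [n]\}$ and construct a constant-size backbone on $21$ vertices into which, for each clause, we graft a three-edge ``clause gadget''. In any valid list edge coloring, the edges colored by the pair $\{x_i, \neg x_i\}$ will be forced to form a single alternating path whose parity at a designated reference edge encodes the truth value of $x_i$, and the chosen color of the clause's ``test edge'' will be a satisfying literal.

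For the clause gadget of $C = \ell_1 \vee \ell_2 \vee \ell_3$ with variables $x_{i_1}, x_{i_2}, x_{i_3}$, I would fix a pair of backbone vertices $(u_C, v_C)$ and insert three parallel edges between them. The ``test edge'' receives the list $\{\ell_1, \ell_2, \ell_3\}$; the other two parallel edges receive the list $\{x_{i_1}, \neg x_{i_1}, x_{i_2}, \neg x_{i_2}, x_{i_3}, \neg x_{i_3}\}$. Since parallel edges are pairwise incident, the three must receive pairwise distinct colors. Together with the single-path requirement for each variable (which limits each of the three variables to contribute at most one edge between $u_C$ and $v_C$, while the path's traversal of $(u_C,v_C)$ demands exactly one), the three colors distribute as one literal per variable, so the test edge's color is a satisfying literal of $C$. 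Since each clause contributes only $O(1)$ edges, the total edge count of $G$ is $O(n)$.

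The nontrivial combinatorial work lies in the backbone. For each variable $x_i$, I would designate a small set of backbone edges with lists $\{x_i, \neg x_i\}$ that, together with the at most four clause-gadget pairs $(u_C, v_C)$ for clauses containing $x_i$, route the color class $\{x_i, \neg x_i\}$ through these gadgets so that the union of all $\{x_i, \neg x_i\}$-edges forms a single alternating path with two fixed endpoints. Since each variable appears in at most four clauses, each variable's path has $O(1)$ length and occupies only $O(1)$ backbone vertices; gadgets of different variables use disjoint color pairs and therefore may freely share the $21$-vertex set without coloring conflicts.

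Finally, I would verify both directions of the equivalence. Given a satisfying assignment of $\varphi$, I color each variable's alternating path in the parity dictated by its truth value, set each clause's test edge to a satisfying literal, and assign the two remaining parallel edges in each clause gadget the (now forced) colors of the two non-satisfying variables; this yields a valid list edge coloring of $(G,L)$. Conversely, from any valid list edge coloring, the parity of each variable's alternating path reads off a truth assignment, and for each clause the test edge is colored by some $\ell_j \in \{\ell_1,\ell_2,\ell_3\}$, which must therefore be true, so every clause is satisfied. The main obstacle is the backbone design in exactly $21$ vertices: one has to guarantee simultaneously, for every variable $x_i$, both that the $\{x_i, \neg x_i\}$-colored edges form a single alternating path with a well-defined parity reference and that the three parallel edges of every clause gadget consistently split one-per-variable. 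This layout problem is delicate but, unlike in the simple-graph case emphasized in the introduction, it admits a straightforward solution exploiting the freedom to use parallel edges and arbitrarily large multiplicities.
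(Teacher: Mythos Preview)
Your clause gadget is exactly the paper's: three parallel edges, one with list $C$ and two with list $\{x_{i_1},\neg x_{i_1},x_{i_2},\neg x_{i_2},x_{i_3},\neg x_{i_3}\}$. The alternating-path encoding of truth values is also the paper's idea. What is missing is the one nontrivial ingredient that pins the vertex count down to $21$, and your proposal explicitly defers it (``delicate but \ldots\ straightforward'') without supplying it.

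The paper does \emph{not} route each variable through its own private $O(1)$-length path touching only its at most four clauses, as your sketch suggests. Instead it first builds the clause-intersection graph $G_\varphi$ (two clauses adjacent iff they share a variable), observes that $\Delta(G_\varphi)\le 9$, and greedily $10$-colors it into classes $\mathcal{C}_0,\ldots,\mathcal{C}_9$. Then $V(G)=\{v_0,\ldots,v_{20}\}$ and \emph{every} clause $C\in\mathcal{C}_r$ gets $(u_C,v_C)=(v_{2r},v_{2r+1})$. Because $\mathcal{C}_r$ is independent in $G_\varphi$, no variable appears in two clauses with the same $r$, so between $v_{2r}$ and $v_{2r+1}$ each variable occurs in at most one clause gadget; variables absent from all clauses of $\mathcal{C}_r$ get a single filler edge $v_{2r}v_{2r+1}$ with list $\{x_i,\neg x_i\}$. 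Between $v_{2r+1}$ and $v_{2r+2}$ every variable gets exactly one edge with list $\{x_i,\neg x_i\}$. The upshot is that for \emph{every} variable $x_i$ the set $c^{-1}(\{x_i,\neg x_i\})$ is forced to be the full path $v_0,v_1,\ldots,v_{20}$, with ``positive'' edges $v_{2r}v_{2r+1}$ carrying satisfied literals and ``negative'' edges $v_{2r+1}v_{2r+2}$ carrying the complementary ones. This globally uniform path is what makes the parity reference well-defined and the one-per-variable split at each clause automatic.

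Without the $10$-coloring step your proposal has no mechanism for assigning the pairs $(u_C,v_C)$, and the approach you hint at---a separate short path per variable threading only its own clauses---does not obviously fit into $21$ vertices nor obviously yield a consistent parity reference across clauses. So the gap is real: you have the right gadget and the right semantics, but the combinatorial layout that makes $21$ vertices suffice is precisely the greedy coloring of $G_\varphi$, and that needs to be stated and used.
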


In what follows, we prove Lemma~\ref{lem:reduction-multi}.
Let $\vrb(\varphi)$ and $\cls(\varphi)$ be the sets of variables and clauses of $\varphi$, respectively.
W.l.o.g.\ assume $\vrb(\varphi)=\{x_0,\ldots,x_{n-1}\}$.

We construct an auxiliary graph $G_\varphi$ with $V(G_\varphi)=\cls(\varphi)$ and such that two clauses $C_1,C_2\in \cls(\varphi)$ are adjacent in $G_\varphi$ iff $C_1\cap C_2 \ne \emptyset$.
Since every clause has three variables and each variable can belong to at most three other clauses, it follows that the maximum degree of $G_\varphi$ is at most $9$.
Let $g:\cls(\varphi)\rightarrow[10]$ be the greedy vertex coloring of $G_\varphi$ in 10 colors, which can be found in linear time in a standard way.
For $i\in [10]$, let $\Cc_i=g^{-1}(i)$.

Let us describe the output instance $(G,L)$.
We put $V(G)=\{v_0,\ldots,v_{20}\}$.
The edges of $G$ join only vertices of consecutive indices. 
For every $r\in [10]$, for every clause $C\in \Cc_r$ we add three new edges with endpoints $v_{2r}$ and $v_{2r+1}$.
The first of this edges, denoted by $e_C^1$, gets list $C$, i.e., the three literals of clause $C$.
Let $x_i$, $x_j$ and $x_k$ be the three variables that appear in $C$.
Then, the two remaining edges, $e_C^2$ and $e_C^3$, get identical lists of $\{x_i,\neg x_i,x_j,\neg x_j,x_k,\neg x_k\}$.
Moreover, for every $r\in [10]$ and for every variable $x_i$ that does not appear in any of the clauses of $\Cc_r$, we add a new edge $v_{2r}v_{2r+1}$ with list $\{x_i,\neg x_i\}$.
Finally, for every $r\in [10]$ and for every variable $x_i\in \vrb(\varphi)$ we add a single new edge $v_{2r+1}v_{2r+2}$ with list $\{x_i,\neg x_i\}$.
This finishes the description of the output instance. See Fig.~\ref{fig:ex-multi} for an example.

  \begin{figure}
	\begin{center}
		\begin{tikzpicture}[scale=0.72, transform shape]
		\tikzset{vtx/.style={draw, circle, line width=1pt, inner sep = 1pt}}
		\tikzset{inv/.style={circle, line width=1pt, inner sep = 1pt}}
		\tikzset{edgecolor/.style={draw, rectangle, line width=.5pt, fill=white, inner sep=2pt, above}}

		\node [vtx] (v0) at (0, 2) {$v_0$};
		\node [vtx] (v1) at (5,2) {$v_1$};
		\node [vtx] (v2) at (10,2) {$v_2$};
		\node [vtx] (v3) at (15,2) {$v_3$};
		
		\draw [thick] (v0) -- +(0,1cm) edge [out=80,in=100] node [above] {$x_1, x_2, \neg x_3$} ([yshift=1cm]v1);
		\draw [thick] (v1) -- +(0,1cm);
		\draw [thick] (v0) edge [bend left=50] node [above,pos=0.5] {$x_1, \neg x_1, x_2, \neg x_2, x_3, \neg x_3$} (v1);
		\draw [thick] (v0) edge node [above,pos=0.5] {$x_1, \neg x_1, x_2, \neg x_2, x_3, \neg x_3$} (v1);
		\draw [thick] (v0) edge [bend right=50] node [above,pos=0.5] {$x_4,\neg x_4$} (v1);
		\draw [thick] (v0) -- +(0,-1cm) edge [out=-80,in=-100] node [above] {$x_5, \neg x_5$} ([yshift=-1cm]v1);
		\draw [thick] (v1) -- +(0,-1cm);
		
		\draw [thick] (v1) -- +(0,1cm) edge [out=80,in=100] node [above] {$x_1, \neg x_1$} ([yshift=1cm]v2);
		\draw [thick] (v2) -- +(0,1cm);
		\draw [thick] (v1) edge [bend left=50] node [above,pos=0.5] {$x_2,\neg x_2$} (v2);
		\draw [thick] (v1) edge node [above,pos=0.5] {$x_3,\neg x_3$} (v2);
		\draw [thick] (v1) edge [bend right=50] node [above,pos=0.5] {$x_4,\neg x_4$} (v2);
		\draw [thick] (v1) -- +(0,-1cm) edge [out=-80,in=-100] node [above] {$x_5, \neg x_5$} ([yshift=-1cm]v2);
		\draw [thick] (v2) -- +(0,-1cm);
		
		\draw [thick] (v2) -- +(0,1cm) edge [out=80,in=100] node [above] {$\neg x_2, \neg x_4, \neg x_5$} ([yshift=1cm]v3);
		\draw [thick] (v3) -- +(0,1cm);
		\draw [thick] (v2) edge [bend left=50] node [above,pos=0.5] {$x_2, \neg x_2, x_4, \neg x_4, x_5, \neg x_5$} (v3);
		\draw [thick] (v2) edge node [above,pos=0.5] {$x_2, \neg x_2, x_4, \neg x_4, x_5, \neg x_5$} (v3);
		\draw [thick] (v2) edge [bend right=50] node [above,pos=0.5] {$x_1,\neg x_1$} (v3);
		\draw [thick] (v2) -- +(0,-1cm) edge [out=-80,in=-100] node [above] {$x_3, \neg x_3$} ([yshift=-1cm]v3);
		\draw [thick] (v3) -- +(0,-1cm);
		

		\end{tikzpicture}
	\end{center}
	\caption{\label{fig:ex-multi}Edges related to clauses $(x_1 \vee x_2 \vee \neg x_3)$ and $(\neg x_2 \vee \neg x_4 \vee \neg x_5)$ assuming that the first of these clauses has color $0$ and the second has color $1$.}
\end{figure}
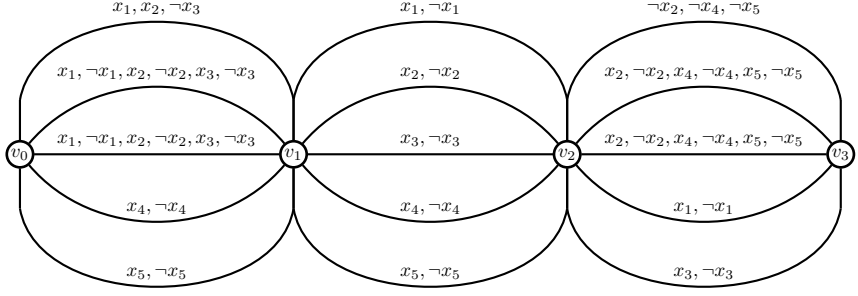

In what follows, edges of the form $v_{2r}v_{2r+1}$ are called {\em positive} and edges of the form $v_{2r+1}v_{2r+2}$ are called {\em negative}.

   \begin{claim} \label{claim:variable-path}
    For every list edge coloring $c$ of $(G,L)$, for every $i\in [n]$, the edges in $c^{-1}(\{x_i,\neg x_i\})$ form a path $v_0,v_1,\ldots,v_{20}$.
   \end{claim}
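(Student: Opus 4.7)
The plan is a tight global counting argument that turns a coincidence of two totals — the number of edges of $G$ and the maximum number of edges one can color in $\{x_i,\neg x_i\}$ — into the desired structure. First, a direct count gives $|E(G)|=20n$: for every $r\in[10]$ the positive block between $v_{2r}$ and $v_{2r+1}$ contributes $3|\Cc_r|$ clause edges plus one filler edge for each of the $n-3|\Cc_r|$ variables not appearing in $\Cc_r$, i.e.\ exactly $n$ edges; the negative block between $v_{2r+1}$ and $v_{2r+2}$ adds $n$ edges, one per variable. Since every list in $L$ consists of literals, every edge of $G$ receives a color in some pair $\{x_j,\neg x_j\}$, so setting $E_j:=|c^{-1}(\{x_j,\neg x_j\})|$ one has
\[
\sum_{j\in[n]} E_j \;=\; |E(G)| \;=\; 20n.
\]

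Next I prove the matching upper bound $E_i\le 20$ for every $i\in[n]$. Fix $i$ and $r\in[10]$. The unique negative edge $v_{2r+1}v_{2r+2}$ whose list equals $\{x_i,\neg x_i\}$ is \emph{forced} by its list to take a color from this set, so one of $x_i,\neg x_i$ is already used at $v_{2r+1}$. Because each color appears at most once among the edges incident to any vertex, at most one further edge at $v_{2r+1}$ can be colored in $\{x_i,\neg x_i\}$; in particular, among the parallel positive edges between $v_{2r}$ and $v_{2r+1}$ at most one receives such a color. Summing over the $10$ blocks gives $E_i\le 10+10=20$.

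Combining the two bounds forces $E_i=20$ for every $i$, so for every consecutive pair $v_k,v_{k+1}$ exactly one joining edge is colored in $\{x_i,\neg x_i\}$; these $20$ edges are precisely the edges of the path $v_0,v_1,\ldots,v_{20}$, which proves the claim. The only delicate step is the per-index upper bound: it hinges on exploiting the \emph{forced} coloring of the dedicated negative edge in order to block one of the two color slots at $v_{2r+1}$. Without this observation $E_i$ could a priori be as large as $30$ (two positive and one negative edge per block), and the global identity $\sum_j E_j=20n$ would no longer be tight enough to pin down the path.
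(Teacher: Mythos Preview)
Your proof is correct and takes a genuinely different route from the paper. The paper argues locally, block by block: for each $r\in[10]$ it shows directly that exactly one positive edge $v_{2r}v_{2r+1}$ lands in $c^{-1}(\{x_i,\neg x_i\})$, splitting into the cases where $x_i$ does or does not appear in a clause of $\Cc_r$ and, in the harder case, analyzing how the three edges $e_C^1,e_C^2,e_C^3$ must distribute bijectively among the three variable pairs $\{x_i,\neg x_i\},\{x_j,\neg x_j\},\{x_k,\neg x_k\}$. You instead run a global double count: $|E(G)|=20n$ together with $E_j\le 20$ for all $j$ forces equality everywhere, and equality at each odd vertex $v_{2r+1}$ then pins down one positive and one negative edge per block. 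Your argument is shorter and avoids the case split; the paper's argument, on the other hand, extracts a bit more local information (the explicit bijection on $\{e_C^1,e_C^2,e_C^3\}$), though that extra information is not needed for the claim itself.

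One small remark: your closing aside that $E_i$ ``could a priori be as large as $30$'' without the forcing observation is an overstatement, since the proper-coloring constraint at $v_{2r+1}$ alone already caps the count at $2$ per block and hence at $20$. This does not affect the validity of the proof: you still need the forced negative edge to deduce, from $E_i=20$, that the two edges at each $v_{2r+1}$ split as one positive and one negative rather than, say, two positives.
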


   \begin{proof}
   For every $r\in[10]$, there is exactly one edge $v_{2r+1}v_{2r+2}$ with list containing $x_i$ or $\neg x_i$, namely with list $\{x_i,\neg x_i\}$.
   It follows that these 10 edges belong to $c^{-1}(\{x_i,\neg x_i\})$.
   It suffices to prove that for every $r\in[10]$ there is also exactly one edge $v_{2r}v_{2r+1}$ in $c^{-1}(\{x_i,\neg x_i\})$.
   This is clear when $x_i$  does not appear in any of the clauses of $\Cc_r$, because then there is exactly one edge $v_{2r+1}v_{2r+2}$ with list containing $x_i$ or $\neg x_i$, namely with list $\{x_i,\neg x_i\}$.
   Otherwise, let $C=\{\ell_i,\ell_j,\ell_k\}$ be the clause of $\Cc_r$ where $\ell_i\in\{x_i,\neg x_i\}$.
   Let $\ell_j\in\{x_j,\neg x_j\}$, $\ell_k\in\{x_k,\neg x_k\}$.
   Then there are exactly three edges $e_C^1$, $e_C^2$, $e_C^3$ incident to $v_{2r}$ and $v_{2r+1}$ and with list containing one of literals in the set $\{x_i,\neg x_i,x_j,\neg x_j,x_k,\neg x_k\}$.
   Indeed, $L(e_C^1)=\{\ell_i,\ell_j,\ell_k\}$, and $L(e_C^2)=L(e_C^3)=\{x_i,\neg x_i,x_j,\neg x_j,x_k,\neg x_k\}$. 
   However, we have already proved that for every $q\in\{i,j,k\}$, one of the edges with endpoints $v_{2r+1}$ and $v_{2r+2}$ is colored with $x_q$ or $\neg x_q$.
   Hence, since every color class is a matching, for every $q\in\{i,j,k\}$, at most one of the edges in $\{e_C^1,e_C^2,e_C^3\}$ is colored with $x_q$ or $\neg x_q$. 
   However, lists of $e_C^1$, $e_C^2$, $e_C^3$ contain only colors of the form $x_q$ or $\neg x_q$ for $q\in\{i,j,k\}$.
   It follows that for every $q\in\{i,j,k\}$ exactly one of the edges in $\{e_C^1,e_C^2,e_C^3\}$ is colored with $x_q$ or $\neg x_q$. 
   In particular there is exactly one edge $v_{2r}v_{2r+1}$ in $c^{-1}(\{x_i,\neg x_i\})$. 
   \end{proof}

   Since $c$ is an edge coloring, the path from the claim above is colored either by $x_i,\neg x_i, x_i, \neg x_i,\ldots$, or by $\neg x_i, x_i, \neg x_i,x_i,\ldots$.
   This implies the following claim.
   
   \begin{claim} \label{claim:one-value-multi}
    For every list edge coloring $c$ of $(G,L)$, for every $i\in [n]$, we have $|c^{-1}(x_i)|=|c^{-1}(\neg x_i)|=10$ and either all edges in $c^{-1}(x_i)$ are positive and all edges in $c^{-1}(\neg x_i)$ are negative or all edges in $c^{-1}(x_i)$ are negative and all edges in $c^{-1}(\neg x_i)$ are positive.
   \end{claim}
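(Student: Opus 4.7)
The plan is very short because the claim is essentially a direct corollary of Claim~\ref{claim:variable-path}. By that claim, for each $i \in [n]$, the set $c^{-1}(\{x_i,\neg x_i\})$ is exactly the $20$-edge path $P_i = v_0 v_1 \cdots v_{20}$. Since $c$ is a proper edge coloring and only two colors, $x_i$ and $\neg x_i$, appear on $P_i$, any two consecutive edges on $P_i$ receive different colors. A proper $2$-coloring of a path is forced to alternate, so the coloring of $P_i$ is either $x_i, \neg x_i, x_i, \neg x_i, \ldots$ or $\neg x_i, x_i, \neg x_i, x_i, \ldots$. In either case, exactly half of the $20$ edges receive each color, which gives $|c^{-1}(x_i)| = |c^{-1}(\neg x_i)| = 10$.

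For the second part, I would match this color alternation with the positive/negative alternation along $P_i$. By the definition of $G$, the edges of $P_i$, in order, are $v_0v_1$, $v_1v_2$, $v_2v_3$, \ldots, $v_{19}v_{20}$, which alternate between positive edges $v_{2r}v_{2r+1}$ (at odd positions $1,3,\ldots,19$) and negative edges $v_{2r+1}v_{2r+2}$ (at even positions $2,4,\ldots,20$). Combining the two alternations: in the first coloring pattern, all odd-position edges of $P_i$ are colored $x_i$, hence all ten edges in $c^{-1}(x_i)$ are positive and all ten edges in $c^{-1}(\neg x_i)$ are negative; in the second pattern the roles are swapped. These are exactly the two cases in the statement.

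There is no real obstacle here: the claim is stated in the paper as a consequence of the remark immediately preceding it, and the only ingredient beyond Claim~\ref{claim:variable-path} is the trivial fact that a proper edge coloring of a path with two colors alternates, combined with the built-in positive/negative alternation along $P_i$.
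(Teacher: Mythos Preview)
Your proposal is correct and matches the paper's own argument essentially verbatim: the paper states the claim as an immediate consequence of Claim~\ref{claim:variable-path} together with the observation that the path must be colored alternately $x_i,\neg x_i,x_i,\ldots$ or $\neg x_i,x_i,\neg x_i,\ldots$, which is exactly what you do.
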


   Now we are ready to prove that $\varphi$ and $(G,L)$ are equivalent.
   
   Assume $c$ is a list edge coloring of $(G,L)$. Define a boolean assignment $f:\vrb(\varphi)\rightarrow\{T,F\}$ by setting $x_i$ to $T$ iff all edges in $c^{-1}(x_i)$ are positive. 
   Now consider an arbitrary clause $C$. 
   By construction, there is a positive edge $e$ with $L(e)=C$.
   If $c(e)=x_q$ for some variable $x_q$ then by Claim~\ref{claim:one-value-multi} all edges in $c^{-1}(x_q)$ are positive, and hence $f(x_q)=T$.
   Since $c(e)\in L(e)$ we have $x_q \in C$, so $C$ is satisfied.
   If $c(e)=\neg x_q$ for some variable $x_q$ then by Claim~\ref{claim:one-value-multi} all edges in $c^{-1}(x_q)$ are negative and hence $f(x_q)=F$.
   Again, since $c(e)\in L(e)$ we have $\neg x_q \in C$, so $C$ is satisfied.   

   Assume $\varphi$ is satisfiable and let $f:\vrb(\varphi)\rightarrow\{T,F\}$ be a satisfying assignment.
   We define a list edge coloring $c$ of $(G,L)$ as follows.
   Recall that for every $r\in[10]$, and for every clause $C\in\Cc_r$ there is an edge $e_C^1$ with $L(e_C^1)=C$ and edges $e_C^2$, $e_C^3$ with $L(e_C^2)=L(e_C^3)=\{x_i,\neg x_i,x_j,\neg x_j,x_k,\neg x_k\}$, where $x_i$, $x_j$ and $x_k$ are the three variables that appear in $C$.
   We color $e_C^1$ with any of the satisfied literals of $C$.
   By symmetry assume $c(e_C^1)\in\{x_i,\neg x_i\}$.
   Then we color $e_C^2$ with $x_j$ if $f(x_j)=T$ and with $\neg x_j$ otherwise.
   Similarly, we color $e_C^3$ with $x_k$ if $f(x_k)=T$ and with $\neg x_k$ otherwise.
   Each of the remaining positive edges $e$ of $G$ has its list equal $\{x_i,\neg x_i\}$ for some $x_i\in\vrb(\varphi)$.
   We color $e$ with $x_i$ if $f(x_i)=T$ and with $\neg x_i$ otherwise.
   It follows that every positive edge is colored with a satisfied literal.
   Every negative edge $\tilde{e}$ has its list equal to $\{x_i,\neg x_i\}$ for some $x_i\in\vrb(\varphi)$.
   We color $\tilde{e}$ with $x_i$ when $f(x_i)=F$ and with $\neg x_i$ when $f(x_i)=T$.
   It follows that every negative edge is colored with an unsatisfied literal.
   Let us show that $c$ does not color incident edges with the same color.
   Since the lists of parallel negative edges are disjoint, in our coloring there are no parallel negative edges of the same color.
   Assume there are two parallel positive edges of the form $v_{2r}v_{2r+1}$ of the same color $\ell$, for some $r\in[10]$. 
   Then the variable of $\ell$ belongs to a clause in $\Cc_r$, for otherwise there is exactly one edge with endpoints $v_{2r}v_{2r+1}$ and with list containing $\ell$.
   However, since $\Cc_r$ is independent in $G_\varphi$, there is exactly one such clause $C$ in $\Cc_r$. 
   It follows that the two parallel edges are among the three edges $e_C^1, e_C^2, e_C^3$.
   However, these three edges got different colors, a contradiction. 
   If two edges are incident but not parallel, one of them is positive and the other negative.
   The former is colored with a satisfied literal and the latter with an unsatisfied literal, so they are colored differently. 
   Hence $c$ is a proper list edge coloring, as required.
   This ends the proof of Lemma~\ref{lem:reduction-multi}.

Theorem~\ref{th:multi} follows immediately from Lemmas~\ref{lem:tovey} and~\ref{lem:reduction-multi} and the NP-hardness of \probTSAT.

\section{Hardness of \probLEC in Simple Graphs}\label{sect:simple}

This section is devoted to the proof of the following lemma.

\begin{lemma}
\label{lem:reduction-simple}
For any instance $\varphi$ of \probTFSAT with $n$ variables there is an equivalent instance $(G,L)$ of \probLECSim with $O(\sqrt{n})$ vertices.
Moreover, the instance $(G,L)$ can be constructed in polynomial time.
\end{lemma}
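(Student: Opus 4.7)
The plan is to refine the multigraph construction of Lemma~\ref{lem:reduction-multi} by ``desparsifying'' each bundle of $\Theta(n)$ parallel edges between two consecutive super-vertices into a bipartite gadget on $O(\sqrt{n})$ simple vertices, while keeping the constant number of layers coming from the $10$-colouring of~$G_\varphi$. Since a simple graph on $N$ vertices has at most $\binom{N}{2}$ edges and a reduction of this form has to encode $\Theta(n)$ clauses of $\varphi$, the target bound $N=\Theta(\sqrt{n})$ is both necessary and sufficient.

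Concretely, I would keep the partition $\Cc_0,\ldots,\Cc_9$ of $\cls(\varphi)$ into independent sets of~$G_\varphi$ (so that two clauses in the same class still share no variable), and for each $r\in[10]$ replace the pair of super-vertices $(v_{2r},v_{2r+1})$ by two ``columns'' $A_r,B_r$ of $k=\lceil\sqrt{n}\rceil$ vertices. Each variable $x_i\in\vrb(\varphi)$ is assigned a unique ``slot'' $(a_i^{(r)},b_i^{(r)})\in A_r\times B_r$, chosen from a Latin-square pattern so that the slot map is injective in each coordinate. A variable that does not appear in any clause of $\Cc_r$ contributes a single ``filler'' edge $a_i^{(r)}b_i^{(r)}$ with list $\{x_i,\neg x_i\}$; the three slots of the three variables of a clause $C\in\Cc_r$ are realised by three edges $e_C^1,e_C^2,e_C^3$ carrying exactly the lists they had in Lemma~\ref{lem:reduction-multi}, pairwise vertex-disjoint by construction. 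The ``negative'' super-edge $v_{2r+1}v_{2r+2}$ is replaced by, for each variable, a single bridge edge $b_i^{(r)}a_i^{(r+1)}$ with list $\{x_i,\neg x_i\}$, so that the union of the slot and bridge edges of $x_i$ forms a single path through all ten layers, as in the multigraph proof.

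With these pieces in place, the arguments of Claim~\ref{claim:variable-path} and Claim~\ref{claim:one-value-multi} should carry over: in every layer each variable has exactly one available edge, its colour alternates between $x_i$ and $\neg x_i$ along its path, and each clause edge $e_C^1$ is forced to use a satisfied literal. The total vertex count is $20k+O(1)=O(\sqrt{n})$, the construction is polynomial, and the equivalence with $\varphi$ then follows by the same two-directional argument as in Lemma~\ref{lem:reduction-multi}.

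The main obstacle is the gadget design between the columns $A_r$ and $B_r$. Unlike in the multigraph, the matching constraint at a single vertex of $A_r$ no longer by itself forces each variable to occupy exactly one slot per layer, since a vertex has only $O(\sqrt{n})$ incident edges rather than all $\Theta(n)$ of them. One therefore needs either auxiliary ``crossbar'' edges that propagate the missing constraints across the columns, or a sufficiently rigid Latin-square/permutation scheme whose structure, combined with the bridge matchings, forces the required uniqueness globally; getting this right while keeping the graph simple, the lists of bounded size, and the total vertex count $O(\sqrt{n})$ is precisely where the ``complicated gadget'' alluded to in Section~\ref{sect:multi} will enter.
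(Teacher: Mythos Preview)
Your outline correctly identifies the overall architecture (keep the $10$-colouring of $G_\varphi$, blow each ``super-vertex'' up into a column of $\Theta(\sqrt{n})$ vertices, thread each variable through all layers on an alternating $\{x_i,\neg x_i\}$-path), and you also correctly identify the obstacle: once the three edges $e_C^1,e_C^2,e_C^3$ no longer share both endpoints, nothing in your two-column picture forces the variable paths to stay consistent across a layer. But you stop exactly there --- the proposal ends by \emph{naming} the missing gadget rather than supplying it, so as written it is not a proof. In particular, in your construction the slot edge at $(a_i^{(r)},b_i^{(r)})$ may be coloured with a literal of a different variable $x_j$; then the two bridge edges of $x_i$ on either side of layer $r$ are not linked by any matching constraint, and $x_i$ can flip value between layers. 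The Latin-square slot assignment and the remark ``injective in each coordinate'' do not repair this (and cannot even hold as stated, since $n$ variables must share $\sqrt{n}$ first coordinates).

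What the paper does to close this gap is substantially more than a crossbar tweak: each clause-verifying gadget $G_r$ spans \emph{seven} layers $L_{6r+1},\ldots,L_{6r+7}$ (not two), introduces per-variable auxiliary colours $a_i,b_i$ and per-clause auxiliary colours $c_C,d_C$, and adds \emph{jumping} edges between non-consecutive layers (e.g.\ $v_{6r+2,i}v_{6r+6,i}$, $v_{6r+3,i}v_{6r+7,i}$). The clause test sits at the last step: the three edges $v_{6r+6,i_j}v_{6r+7,i_j}$ have lists $\{\ell_j,c_C,d_C\}$, so two of them can escape via $c_C,d_C$ and the third must use its literal. The auxiliary colours and jumps are engineered so that, whatever choice is made there, the set $c^{-1}(\{x_i,\neg x_i\})$ is forced to be a length-$6$ path through all seven $v_{6r+\bullet,i}$ vertices with the correct parity; this is the content of the paper's Claim~\ref{claim:variable-path-simple}, proved by a three-case analysis. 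That forcing is precisely the missing piece in your proposal, and it does not fall out of a Latin-square placement alone.
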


\subsection{Intuition} 
 The general idea is to follow the approach of Lemma~\ref{lem:reduction-multi} and replace the edges with multiplicity $O(n)$ with bipartite graphs with $O(\sqrt{n})$ vertices and $O(n)$ edges. It seems that using only one such graph instead of every bunch of parallel edges with common endpoints is not enough to get a simple graph (though it suffices to reduce the multiplicity to three).
 In our construction, for every $r\in[10]$, we replace every two consecutive bunches of parallel edges between $v_{2r}$, $v_{2r+1}$, and $v_{2r+2}$ from the construction in Lemma~\ref{lem:reduction-multi} by seven layers $L_i$, $i=6r+1,\ldots,6r+7$, each of $O(\sqrt{n})$ vertices, with some edges joining both consecutive and non-consecutive layers. 
 The subgraph induced by $\bigcup_{i=6r+1}^{6r+7}L_i$ is called the $r$-th {\em clause verifying gadget} $G_r$.
 (Note that the layers $L_i$ for $i\equiv 1\pmod 6$ are shared between consecutive gadgets.)
 Analogously as in Lemma~\ref{lem:reduction-multi}, the role of $G_r$ is to check whether all clauses in $\Cc_r$ are satisfied. 
 We add  also two additional layers $L_0$ and $L_{62}$ which make some of our arguments simpler.
 
\subsection{Construction} 
 It will be convenient to assume that $\sqrt{n}\in\mathbb{N}$. 
 We do not lose on generality because otherwise we just add $n^+=(\ceil{\sqrt{n}}+1)^2-n$ variables $y_1,y_2,\ldots,y_{n^+}$ and clauses 
 \[\{y_1,y_2,y_3\},\{y_2,y_3,y_4\},\ldots,\{y_{n^+-2},y_{n^+-1},y_{n^+}\}.\]
 Note that $n^+\ge 3$, $n^+\le(\sqrt{n}+2)^2-n=4\sqrt{n}+4$ and $\sqrt{n+n^+}=\ceil{\sqrt{n}}+1\in\mathbb{N}$.
 Hence we added only $O(\sqrt{n})$ variables and clauses, and the resulting formula is still a \probTFSAT instance. 
 
 We begin as in Lemma~\ref{lem:reduction-multi}, by building the graph $G_\varphi$, finding its greedy coloring $g$ which partitions the clause set into 10 color classes $\Cc_r$, $r\in[10]$.
 Let us build the instance $(G,L)$ step by step.
 
 Add two sets of vertices (called {\em layers}) $L_i=\{v^i_j \mid j\in [\sqrt{n}]\}$, $i=0,1$.
 Then add all possible $n$ edges between $L_0$ and $L_1$ forming a complete bipartite graph.
 Map the $n$ variables to the $n$ edges in a $1-1$ way.
 For every $i\in[n]$, set the list of the edge assigned to $x_i$ to $\{x_i,\neg x_i\}$.
 
 The vertex set $V(G)$ contains further 60 layers of vertices $L_i$, $i=\{2,\ldots,61\}$, where $L_{i} = \{v^{i}_j \mid j\in [6\sqrt{n}+3]\}$.
 Finally, $L_{62} = \{v^{62}_j \mid j\in [\sqrt{n}+1]\}$.
 Denote also $L_{-1} = L_{63} = \emptyset$.
 In what follows we add the remaining edges of $G$.
 Whenever we add edges between $L_i$ and $L_{i-1}$, for every $j<i$ all the edges of the output graph between $L_j$ and $L_{j-1}$ are already added.
 We will make sure to keep the following invariants satisfied during the process of construction (note that they hold for the part constructed so far).
 
 \begin{invariant}[\textbf{Uniqueness}]
 \label{inv:unique}
 For every $i\in[62]$, for every variable $x_j\in \vrb(\varphi)$ there is at most one edge $uv\in E(L_i, L_{i+1})$ such that $\{x_j,\neg x_j\} \cap L(uv)\ne \emptyset$. 
 Moreover, after finishing of adding edges between $L_i$ and $L_{i+1}$, there is exactly one such edge.
 \end{invariant}

 Using the notation from Invariant~\ref{inv:unique}, if the edge $uv$ exists, we can denote $v^+_{i,j}=u$ and $v^-_{i+1,j}=v$.
 
 \begin{invariant}[\textbf{Flow}]
 \label{inv:flow}
 For every $i\in\{1,\ldots,62\}$, for every variable $x_j\in \vrb(\varphi)$ we have that $v^-_{i,j}=v^+_{i,j}$, unless $v^-_{i,j}$ or $v^+_{i,j}$ is undefined.
 Moreover, the equality holds after finishing of adding edges between $L_i$ and $L_{i+1}$.
 \end{invariant}
 
 Thanks to Invariant~\ref{inv:flow}, after finishing of adding edges between $L_i$ and $L_{i+1}$, we can just define $v_{i,j}:=v^-_{i,j}=v^+_{i,j}$ for $i\in\{1,\ldots,61\}$. We also put $v_{0,j}=v^+_{0,j}$ and $v_{62,j}=v^-_{62,j}$.
 In our construction we will use some additional colors apart from the literals. However, the following invariant holds. 
 
 \begin{invariant}[\textbf{Lists}]
 \label{inv:lists}
 For every edge $e$ of $G$, the list $L(e)$ contains at least one literal.
 \end{invariant} 

 For every $i\in[62]$ for every vertex $v\in L_i$ let $\deg^-(v)=|E(L_{i-1}),\{v\}|$ and $\deg^+(v)=|E(L_{i+1}),\{v\}|$.
 
 \begin{invariant}[\textbf{Indegrees}]
 \label{inv:indegree}
 For every $i\in[62]$ for vertex $v\in L_i$  we have $\deg^-(v) \le \sqrt{n}$. 
 \end{invariant}
 
 \begin{invariant}[\textbf{Jumping edges}]
 \label{inv:jump}
 For every $i\in[62]$, for vertex $v\in L_i$ there are at most $\sqrt{n}$ edges from $v$ to layers $L_j$ for $j>i+1$.
 \end{invariant}

 By Invariant~\ref{inv:flow} and Invariant~\ref{inv:lists}, for every vertex $v\in V(G)$ it holds that $\deg^+(v)\le \deg^-(v)$. Hence Invariant~\ref{inv:indegree} gives the claim below.
 
 \begin{claim}[\textbf{Outdegrees}]
 \label{cl:outdegree}
 For every $i\in[62]$ for every vertex $v\in L_i$  we have $\deg^+(v) \le \sqrt{n}$. 
 \end{claim}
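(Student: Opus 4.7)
The plan is to implement the remark made just above the claim: the bound is a direct bookkeeping consequence of the invariants already set up. Fix $i\in[62]$ and $v\in L_i$. I would bound $\deg^+(v)$ by exhibiting an injection from the outgoing edges of $v$ (edges of $E(L_i,L_{i+1})$ incident to $v$) into the incoming edges of $v$ (edges of $E(L_{i-1},L_i)$ incident to $v$), and then invoke Invariant~\ref{inv:indegree} to conclude $\deg^+(v)\le \deg^-(v)\le \sqrt{n}$.

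To build the injection, I would use Invariant~\ref{inv:lists} to pick, for each outgoing edge $e$ of $v$, a literal $\ell_e\in L(e)$ and let $x_{j(e)}$ be its underlying variable. Invariant~\ref{inv:unique} then guarantees that distinct outgoing edges of $v$ give distinct variables, and that the $L_i$-endpoint of each such $e$ equals $v^+_{i,j(e)}=v$. Invariant~\ref{inv:flow} immediately yields $v^-_{i,j(e)}=v$, which means that there is an incoming edge of $v$ whose list contains $\{x_{j(e)},\neg x_{j(e)}\}$. Sending $e$ to this incoming edge is injective, since the variables $x_{j(e)}$ are pairwise distinct, completing the argument.

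The only boundary case that needs separate handling is $i=0$, where Invariant~\ref{inv:flow} is not stated (and indeed $L_{-1}=\emptyset$, so the ``$\deg^+\le\deg^-$'' reasoning vacuously fails). There, however, the outgoing edges of $v\in L_0$ are exactly the edges of the complete bipartite graph on $L_0\cup L_1$ fixed at the very start of the construction, so $\deg^+(v)=|L_1|=\sqrt{n}$ holds directly. I do not foresee any genuine obstacle: the whole claim is essentially an accounting step, with the variable-based matching between in-edges and out-edges at every internal vertex doing all of the work, and the invariants preconditioned exactly to make this matching well defined.
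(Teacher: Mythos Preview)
Your proposal is correct and matches the paper's one-line argument (use Invariants~\ref{inv:flow} and~\ref{inv:lists}, implicitly together with Invariant~\ref{inv:unique}, to get $\deg^+(v)\le\deg^-(v)$, then apply Invariant~\ref{inv:indegree}); your separate handling of $i=0$ is in fact more careful than the paper, which asserts $\deg^+(v)\le\deg^-(v)$ for every vertex even though this fails on $L_0$. The one point you might make explicit is that injectivity of $e\mapsto e'$ needs not only that the variables $x_{j(e)}$ are pairwise distinct but also that each edge between consecutive layers carries literals of a single variable (true by inspection of the construction), so that distinct variables land on distinct incoming edges.
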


 Invariants~\ref{inv:unique} and~\ref{inv:lists} immediately imply the following.
 
 \begin{claim}
 \label{cl:n-edges}
 For every $i\in[62]$, we have $|E(L_i,L_{i+1})| \le n$.
 \end{claim}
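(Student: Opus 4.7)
The plan is a simple charging argument: show that every edge between consecutive layers can be uniquely blamed on some variable of $\varphi$, and then use that there are only $n$ variables.

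First, I would fix an index $i \in [62]$ and consider an arbitrary edge $e \in E(L_i, L_{i+1})$. By Invariant~\ref{inv:lists}, its list $L(e)$ contains at least one literal, so there exists a variable $x_{j(e)} \in \vrb(\varphi)$ with $\{x_{j(e)}, \neg x_{j(e)}\} \cap L(e) \neq \emptyset$. Pick any such variable (breaking ties arbitrarily when $L(e)$ contains literals of several variables) and charge $e$ to $x_{j(e)}$. This defines a function $e \mapsto x_{j(e)}$ from $E(L_i, L_{i+1})$ to $\vrb(\varphi)$.

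Next, I would apply Invariant~\ref{inv:unique}: for every fixed variable $x_j$, there is at most one edge $uv \in E(L_i, L_{i+1})$ whose list meets $\{x_j, \neg x_j\}$. In particular, at most one edge is charged to $x_j$, so the map above is injective. Hence $|E(L_i, L_{i+1})| \le |\vrb(\varphi)| = n$, which proves the claim. There is no real obstacle here; the argument is essentially bookkeeping and merely records the bound that Invariants~\ref{inv:unique} and~\ref{inv:lists} were designed to yield.
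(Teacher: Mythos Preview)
Your proof is correct and takes essentially the same approach as the paper, which simply states that the claim follows immediately from Invariants~\ref{inv:unique} and~\ref{inv:lists}. You have merely spelled out the obvious charging argument behind that one-line remark.
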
 

 
%
%
%
%
 
 Let us fix $r\in[10]$. We add the edges of the $r$-th {\em clause verifying gadget} $G_r$. 
 Although $G$ is undirected, we will say that an edge $uv$ between $L_i$ and $L_j$ for $i<j$ is {\em from $u$ to $v$} and {\em from $L_i$ to $L_j$}.
 Below we {\em describe} the edges in $G_r$ in the order which is convenient for the exposition.
 However, the {\em algorithm} adds the edges between layers in the left-to-right order, i.e., for $i<j$, edges to $L_i$ are added before edges to $L_j$.
 
 \begin{enumerate}
  \item 
 Edges to $L_{\ell}$ for $\ell=6r+2,6r+4,6r+6$.
 
 For every clause $C\in\Cc_r$ we do the following.
 Let $x_{i_1}, x_{i_2}, x_{i_3}$ be the three different variables that appear in the literals of $C$.
 Let $v_j = v^-_{\ell-1,i_j}$ for $j=1,2,3$.
 Note that vertices $v_1$, $v_2$, $v_3$ need not be distinct.
 By Claim~\ref{cl:outdegree}, $|N(v_j)\cap L_{\ell}|\le \sqrt{n}$ for $j=1,2,3$.
 Let $S=\{v \in L_{\ell} \mid \deg^-(v) = \sqrt{n}\}$.
 By Claim~\ref{cl:n-edges}, there is $|S|\le\sqrt{n}$.
 Hence, for $j=1,2,3$ we have $|L_{\ell} \setminus (N(\{v_j\}) \cup S)| \ge 4\sqrt{n}+3$ and we can pick a vertex $w_j\in L_{\ell}$ that has at most $\sqrt{n}-1$ edges from $L_{\ell-1}$, is not adjacent to $v_j$, and is different than $w_{j'}$ for each $j'<j$. 
 If $\ell=6k+6$ we additionally require that for every $j=1,2,3$, the vertex $w_j$ is not adjacent to $v^-_{6r+2,i_j}$ or $v^-_{6r+4,i_j}$.
 By Invariant~\ref{inv:jump} this eliminates at most $2\sqrt{n}$ more candidates, so it is still possible to choose all the $w_j$'s.
 For each $j=1,2,3$, we add an edge $v_jw_j$ with $L(v_jw_j)=\{x_{i_j},\neg x_{i_j}\}$.
 Moreover, if $\ell=6k+6$, for every $j=1,2,3$ we add 
   an edge $v^-_{6r+2,i_j}w_j$ with list $\{x_{i_j},\neg x_{i_j},a_{i,j}\}$ and 
   an edge $v^-_{6r+4,i_j}w_j$ with list $\{x_{i_j},\neg x_{i_j},b_{i,j}\}$.
 The conditions used to choose $w_1$, $w_2$ and $w_3$ guarantee that we do not introduce parallel edges.
 
 For every variable $x_i$ that is not present in any of the clauses of $\Cc_r$ we find a vertex $w\in L_{\ell}$ that has at most $\sqrt{n}-1$ edges from $L_{\ell-1}$ and is not adjacent to $v^-_{\ell-1,i}$. 
 Again, this is possible because there are at most $2\sqrt{n}$ vertices in $L_{\ell}$ that violate any of these constraints.
 We add an edge $v^-_{\ell-1,i}w$ with $L(v^-_{\ell-1,i}w)=\{x_{i},\neg x_{i}\}$.
 
 Note that all invariants are satisfied: for Invariant~\ref{inv:unique} it follows from the fact that $\Cc_r$ is independent in $G_\varphi$, while invariants~\ref{inv:flow},~\ref{inv:lists},~\ref{inv:indegree} follow immediately from the construction.
 Invariant~\ref{inv:jump} stays satisfied after adding $v^-_{6r+2,i_j}w_j$ because for every variable $x_k$ such that $v^-_{6r+2,k}=v^-_{6r+2,i_j}$ we add at most one edge from $v^-_{6r+2,i_j}$ to $L_{6r+6}$, and the number of such variables is equal to $\deg^-(v^-_{6r+2,i_j})$, which is at most $\sqrt{n}$ by Invariant~\ref{inv:indegree} (analogous argument applies to adding the edge $v^-_{6r+4,i_j}w_j$).

 \item
 Edges to $L_{\ell}$ for $\ell=6r+3,6r+5,6r+7$.
 
 For every clause $C\in\Cc_r$ we do the following.
 Let $C=\{\ell_1,\ell_2,\ell_3\}$ and let $x_{i_j}$ be the variable from the literal of $\ell_j$, for $j=1,2,3$.
 Let $w_j = v^-_{\ell-1,i_j}$ for $j=1,2,3$.
 By Claim~\ref{cl:outdegree}, $|N(\{w_1,w_2,w_3\}\cap L_{\ell})|\le 3\sqrt{n}$.
 Also, there are at most $\sqrt{n}+2$ vertices in $L_{\ell}$ with at least $\sqrt{n}-2$ edges from $L_{\ell-1}$.
 Indeed, otherwise $|E(L_{\ell-1},L_{\ell})|\ge n + \sqrt{n} - 6$ and either $n \le 36$ (and the lemma is trivial) or there is a contradiction with Claim~\ref{cl:n-edges}.
 Hence, we can find a vertex $z_{\ell,C}\in L_{\ell}$ that has at most $\sqrt{n}-3$ edges to $L_{\ell-1}$ and is not adjacent to $\{w_1,w_2,w_3\}$. 
 If $\ell=6k+7$ we additionally require that the vertex $z_{6k+7,C}$ is not adjacent to $z_{6k+3,C}$ or $z_{6k+5,C}$.
 By Invariant~\ref{inv:jump} this eliminates at most $2\sqrt{n}$ more candidates, so it is still possible to choose vertex $z_{6k+7,C}$.
 For each $j=1,2,3$, we add an edge $w_jz_{\ell,C}$.
 We put $L(w_jz_{6r+3,C})=\{x_{i_j},\neg x_{i_j},a_{i_j}\}$, 
        $L(w_jz_{6r+5,C})=\{x_{i_j},\neg x_{i_j},b_{i_j}\}$, and
        $L(w_jz_{6r+7,C})=\{\ell_j,c_{C},d_{C}\}$.
 (The colors $a_{i_j},b_{i_j},c_{C},d_{C}$ are not literals --- these are new auxiliary colors; each variable $x_i$ has its own distinct auxiliary colors $a_i, b_i$, and each clause $C$ has its own auxiliary colors  $c_C, d_C$.)
 We add edges $z_{6r+3,C}z_{6r+7,C}$ and $z_{6r+5,C}z_{6r+7,C}$, both with lists $\{x_{i_1},\neg x_{i_1},x_{i_2},\neg x_{i_2},x_{i_3},\neg x_{i_3}\}$.
 
 For every variable $x_i$ that is not present in any of the clauses of $\Cc_r$ we proceed analogously as in Step 1.
 
 The invariants hold for the similar reasons as before. In particular,  Invariant~\ref{inv:jump} stays satisfied after adding $z_{6r+3,C}z_{6r+7,C}$ because for every clause $C'$ such that $z_{6r+3,C'}=z_{6r+3,C}$ we add exactly one edge from $z_{6r+3,C}$ to $L_{6r+7}$, and the number of such clauses is bounded by $\deg^-(z_{6r+3,C})/3$, which is at most $\sqrt{n}/3$ by Invariant~\ref{inv:indegree} (analogous argument applies to adding the edge $z_{6r+5,C}z_{6r+7,C}$).
%
%
%

 \end{enumerate}
 
 Finally, we add edges between $L_{61}$ and $L_{62}$. 
 For every variable $x_i$ we find a vertex $w\in L_{62}$ that is not adjacent to $v^-_{61,i}$, which is possible because $\deg^+(v^-_{61,i}) \le \sqrt{n}$.
 We add an edge $v^-_{61,i}w$ with $L(v^-_{61,i}w)=\{x_{i},\neg x_{i}\}$.

 The following claims follow directly from the construction.
 
 \begin{claim}
 \label{cl:struct1}
 For every $r\in[10]$, for every clause $C\in\Cc_r$ with variables $x_{i_1}, x_{i_2}, x_{i_3}$, and for each $\ell=6r+3,6r+5,6r+7$ we have $v_{\ell,i_1}=v_{\ell,i_2}=v_{\ell,i_3}=z_{\ell,C}$.
  Moreover, for each $\ell=6r+3,6r+5,6r+7$ and $j=1,2,3$ we have $L(z_{\ell,C}v_{\ell+1,i_j})=\{x_{i_j},\neg x_{i_j}\}$.
 \end{claim}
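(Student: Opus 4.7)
The plan is a direct unfolding of the construction, leaning on the invariants already established in earlier stages. I would fix $r\in[10]$, a clause $C\in\Cc_r$ with variables $x_{i_1},x_{i_2},x_{i_3}$, and $\ell\in\{6r+3,6r+5,6r+7\}$. The first equality follows from Step 2 applied to layer $L_\ell$, which by definition sets $w_j:=v^-_{\ell-1,i_j}$ and inserts the edge $w_jz_{\ell,C}$ whose list contains at least one element of $\{x_{i_j},\neg x_{i_j}\}$ (for $\ell=6r+3,6r+5$ it contains both, and for $\ell=6r+7$ it contains the literal $\ell_j\in\{x_{i_j},\neg x_{i_j}\}$). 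By Invariant~\ref{inv:unique} this is the only edge between $L_{\ell-1}$ and $L_\ell$ whose list intersects $\{x_{i_j},\neg x_{i_j}\}$, hence $v^-_{\ell,i_j}=z_{\ell,C}$. Invoking Invariant~\ref{inv:flow} after the processing of layer $L_\ell$ is complete yields $v_{\ell,i_j}=z_{\ell,C}$ for each $j$, so $v_{\ell,i_1}=v_{\ell,i_2}=v_{\ell,i_3}=z_{\ell,C}$ as required.

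For the moreover part I would split into cases according to which step of the construction inserts the edges from $L_\ell$ to $L_{\ell+1}$. When $\ell\in\{6r+3,6r+5\}$, these are inserted in Step 1 applied to the layer $L_{\ell+1}\in\{L_{6r+4},L_{6r+6}\}$ of the same gadget $G_r$: the algorithm, processing the clause $C$, sets $v_j=v^-_{\ell,i_j}$, which by the first part equals $z_{\ell,C}$, and inserts the edge $v_jw'_j$ with list exactly $\{x_{i_j},\neg x_{i_j}\}$. By Invariant~\ref{inv:unique} this edge coincides with $z_{\ell,C}v_{\ell+1,i_j}$, yielding the claimed list. For $\ell=6r+7$ the case depends on whether $r<9$: if so then $L_{\ell+1}=L_{6(r+1)+2}$ is the first layer of the next gadget $G_{r+1}$ and the identical Step 1 analysis applies, while if $r=9$ then $L_{\ell+1}=L_{62}$ and the final step of the construction explicitly attaches to each $v^-_{61,i_j}=z_{61,C}$ an edge with list $\{x_{i_j},\neg x_{i_j}\}$.

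Since each item in the claim is a straight re-reading of the corresponding construction step, I do not anticipate any real obstacle. The only thing that needs attention is bookkeeping: verifying that the symbols $v_{\ell,i_j}$ and $z_{\ell,C}$ are well-defined at the moment they are used (which amounts to having Invariants~\ref{inv:unique} and~\ref{inv:flow} available for all indices smaller than $\ell$), and routing each of the three cases $\ell\in\{6r+3,6r+5,6r+7\}$ to the correct step, with the split at $\ell=6r+7$ between $r<9$ and $r=9$ being the only mildly nontrivial piece of casework.
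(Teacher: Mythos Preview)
Your proposal is correct and matches the paper's approach: the paper simply states that the claim ``follows directly from the construction'' without giving a proof, and your argument is precisely the careful unfolding of Steps~1 and~2 together with Invariants~\ref{inv:unique} and~\ref{inv:flow} that this remark implicitly points to. The only minor refinement is that in the $\ell=6r+7$, $r<9$ case your phrase ``the identical Step~1 analysis applies'' should be read as covering both subcases of Step~1 for layer $L_{6(r+1)+2}$ (the variable $x_{i_j}$ may or may not occur in a clause of $\Cc_{r+1}$), but in either subcase the inserted edge has list $\{x_{i_j},\neg x_{i_j}\}$, so the conclusion is unaffected.
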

  
 \begin{claim}
 \label{cl:struct2}
 For every edge $uv\in E(G)$, where $u\in L_j, v\in L_k$, if $\{x_i,\neg x_i\}\cap L(uv)\ne\emptyset$, then $u=v_{j,i}$ and $u=v_{k,i}$.
 \end{claim}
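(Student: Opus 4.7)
The plan is to prove Claim~\ref{cl:struct2} by a direct enumeration of the edges in $G$ whose lists contain a literal of $x_i$. Recall that the vertex $v_{\ell,i}$ was defined (via Invariants~\ref{inv:unique} and~\ref{inv:flow}) as the common endpoint in $L_\ell$ of the two unique edges between $L_{\ell-1}$--$L_\ell$ and $L_\ell$--$L_{\ell+1}$ whose lists intersect $\{x_i,\neg x_i\}$. So once every literal-carrying edge of $G$ is accounted for, the claim reduces to bookkeeping.

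For an edge between adjacent layers ($k=j+1$), Invariant~\ref{inv:unique} yields exactly one edge between $L_j$ and $L_{j+1}$ with $\{x_i,\neg x_i\}\cap L(uv)\neq\emptyset$; by convention its endpoints are $v^+_{j,i}$ and $v^-_{j+1,i}$, which coincide with $v_{j,i}$ and $v_{j+1,i}$ by Invariant~\ref{inv:flow}. This disposes of all but the few non-consecutive edges carrying a literal, which an inspection of the construction shows to be only of two kinds. First, in Step~1 at level $\ell=6r+6$ we added edges $v^-_{6r+2,i_j}w_j$ and $v^-_{6r+4,i_j}w_j$ with lists $\{x_{i_j},\neg x_{i_j},a_{i,j}\}$ and $\{x_{i_j},\neg x_{i_j},b_{i,j}\}$; one endpoint is literally $v^-_{\ell',i_j}=v_{\ell',i_j}$ by construction (with $\ell'\in\{6r+2,6r+4\}$), and the other, $w_j$, is also the $L_{6r+6}$-endpoint of the simultaneously introduced adjacent edge $v_jw_j$ of list $\{x_{i_j},\neg x_{i_j}\}$, so $w_j = v^-_{6r+6,i_j} = v_{6r+6,i_j}$. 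Second, in Step~2 we added the jumping edges $z_{6r+3,C}z_{6r+7,C}$ and $z_{6r+5,C}z_{6r+7,C}$ whose lists are $\{x_{i_1},\neg x_{i_1},x_{i_2},\neg x_{i_2},x_{i_3},\neg x_{i_3}\}$; Claim~\ref{cl:struct1} identifies $z_{\ell,C}$ with $v_{\ell,i_q}$ for each variable $x_{i_q}$ appearing in~$C$ and each $\ell\in\{6r+3,6r+5,6r+7\}$, so both endpoints match the desired $v_{\ell,i_q}$ regardless of which literal of $\{x_{i_q},\neg x_{i_q}\}$ one focuses on.

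The only genuine risk in this plan is incompleteness of the case analysis, so the main work is to scan the construction methodically and confirm that every edge whose list contains a literal falls into one of the three templates above. The rigidity enforced by Invariants~\ref{inv:unique}--\ref{inv:jump}, together with the explicit list assignments written down in Steps~1 and~2, makes the enumeration routine rather than delicate.
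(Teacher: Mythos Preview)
Your proposal is correct and follows essentially the same approach as the paper, which simply asserts that the claim ``follows directly from the construction'' without further detail; you have merely spelled out the case analysis (adjacent-layer edges via Invariants~\ref{inv:unique}--\ref{inv:flow}, and the two kinds of jumping edges via the explicit definitions in Steps~1 and~2 together with Claim~\ref{cl:struct1}) that the paper leaves implicit.
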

  
  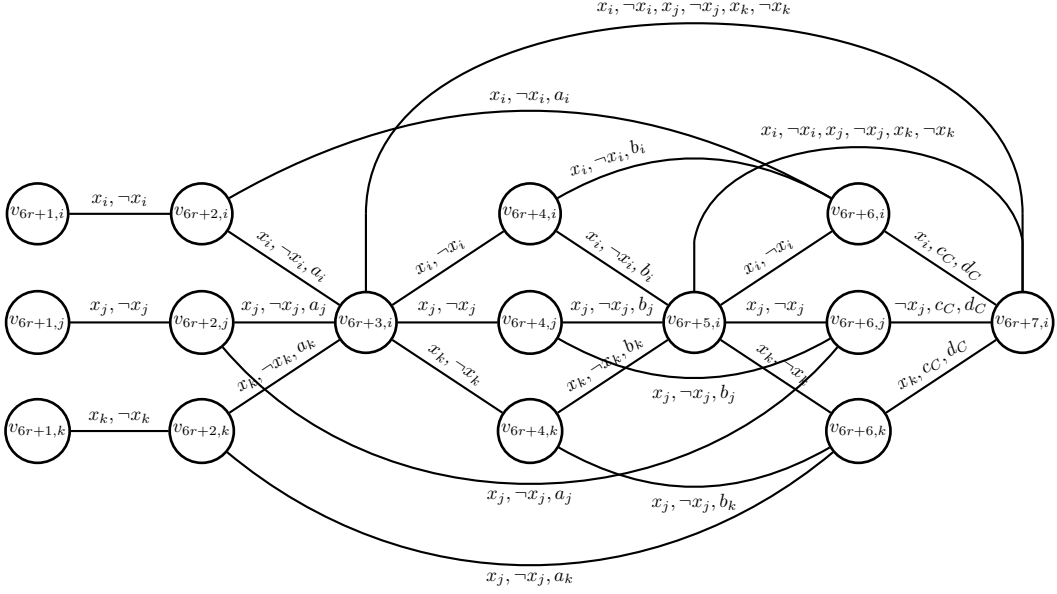
\begin{figure}
  \begin{center}
\begin{tikzpicture}[scale=0.72, transform shape]
\tikzset{vtx/.style={draw, circle, line width=1pt, inner sep = 1pt}}
\tikzset{inv/.style={circle, line width=1pt, inner sep = 1pt}}
\tikzset{edgecolor/.style={draw, rectangle, line width=.5pt, fill=white, inner sep=2pt, above}}

\node [vtx] (v11) at (0,4) {$v_{6r+1,i}$};
\node [vtx] (v12) at (0,2) {$v_{6r+1,j}$};
\node [vtx] (v13) at (0,0) {$v_{6r+1,k}$};

\node [vtx] (v21) at (3,4) {$v_{6r+2,i}$};
\node [vtx] (v22) at (3,2) {$v_{6r+2,j}$};
\node [vtx] (v23) at (3,0) {$v_{6r+2,k}$};

\node [vtx] (v3) at (6,2) {$v_{6r+3,i}$};

\node [vtx] (v41) at (9,4) {$v_{6r+4,i}$};
\node [vtx] (v42) at (9,2) {$v_{6r+4,j}$};
\node [vtx] (v43) at (9,0) {$v_{6r+4,k}$};

\node [vtx] (v5) at (12,2) {$v_{6r+5,i}$};

\node [vtx] (v61) at (15,4) {$v_{6r+6,i}$};
\node [vtx] (v62) at (15,2) {$v_{6r+6,j}$};
\node [vtx] (v63) at (15,0) {$v_{6r+6,k}$};

\node [vtx] (v7) at (18,2) {$v_{6r+7,i}$};

\draw [thick] (v11) edge node [above,pos=0.5] {$x_i,\neg x_i$} (v21);
\draw [thick] (v12) edge node [above,pos=0.5] {$x_j,\neg x_j$} (v22);
\draw [thick] (v13) edge node [above,pos=0.5] {$x_k,\neg x_k$} (v23);

\draw [thick] (v21) edge node [above,pos=0.5,sloped] {$x_i,\neg x_i,a_i$} (v3);
\draw [thick] (v22) edge node [above,pos=0.5] {$x_j,\neg x_j,a_j$} (v3);
\draw [thick] (v23) edge node [above,pos=0.5,sloped] {$x_k,\neg x_k,a_k$} (v3);

\draw [thick] (v3) edge node [above,pos=0.5,sloped] {$x_i,\neg x_i$} (v41);
\draw [thick] (v3) edge node [above,pos=0.5] {$x_j,\neg x_j$} (v42);
\draw [thick] (v3) edge node [above,pos=0.5,sloped] {$x_k,\neg x_k$} (v43);

\draw [thick] (v41) edge node [above,pos=0.5,sloped] {$x_i,\neg x_i,b_i$} (v5);
\draw [thick] (v42) edge node [above,pos=0.5] {$x_j,\neg x_j,b_j$} (v5);
\draw [thick] (v43) edge node [above,pos=0.5,sloped] {$x_k,\neg x_k,b_k$} (v5);

\draw [thick] (v5) edge node [above,pos=0.5,sloped] {$x_i,\neg x_i$} (v61);
\draw [thick] (v5) edge node [above,pos=0.5] {$x_j,\neg x_j$} (v62);
\draw [thick] (v5) edge node [above,pos=0.5,sloped] {$x_k,\neg x_k$} (v63);

\draw [thick] (v61) edge node [above,pos=0.5,sloped] {$x_i,c_C,d_C$} (v7);
\draw [thick] (v62) edge node [above,pos=0.5] {$\neg x_j,c_C,d_C$} (v7);
\draw [thick] (v63) edge node [above,pos=0.5,sloped] {$x_k,c_C,d_C$} (v7);

\draw [thick] (v3) -- +(0,2cm) edge [out=90,in=90] node [above,sloped] {$x_i,\neg x_i,x_j,\neg x_j,x_k,\neg x_k$} ([yshift=2cm]v7);

\draw [thick] (v7) -- +(0,2cm);

\draw [thick] (v5) -- +(0,1.5cm) edge [out=80,in=100] node [above] {$x_i,\neg x_i,x_j,\neg x_j,x_k,\neg x_k$} ([yshift=1.5cm]v7);
\draw [thick] (v7) -- +(0,1.5cm);

\draw [thick] (v21) edge [bend left] node [above] {$x_i,\neg x_i,a_i$} (v61);
\draw [thick] (v41) edge [bend left] node [above,pos=0.2,sloped] {$x_i,\neg x_i,b_i$} (v61);

\draw [thick] (v22) edge [bend right=50] node [below] {$x_j,\neg x_j,a_j$} (v62);
\draw [thick] (v42) edge [bend right] node [below] {$x_j,\neg x_j,b_j$} (v62);

\draw [thick] (v23) edge [bend right=40] node [below] {$x_j,\neg x_j,a_k$} (v63);
\draw [thick] (v43) edge [bend right] node [below] {$x_j,\neg x_j,b_k$}(v63);

\end{tikzpicture}
    \end{center}
\caption{\label{fig:gadget}Edges in the gadget $G_r$ related to a clause $(x_i \vee \neg x_j \vee x_k)$ from $\Cc_r$.}
\end{figure}

  This finishes the description of the output instance. 
 Since $G$ contains $O(1)$ layers, each with $O(\sqrt{n})$ vertices, it follows that $|V(G)|=O(\sqrt{n})$, as required.
 See Fig~\ref{fig:gadget} for an illustration of edges representing a single clause within a clause verifying gadget.
 
\subsection{Structure of coloring}  
Similarly as for multigraphs the crux of the equivalence between instances is the following claim.
 
   \begin{claim} \label{claim:variable-path-simple}
    For every list edge coloring $c$ of $(G,L)$, for every $i\in [n]$, the edges in $c^{-1}(\{x_i,\neg x_i\})$ form a path $P_i$ from $L_0$ to $L_{62}$.
    Moreover, if $P_i$ contains an edge $v_{6r+6,i}v_{6r+7,i}$ for some $r\in[10]$, then this edge in preceded by an even number of edges on $P_i$.
   \end{claim}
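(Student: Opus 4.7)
The plan is to show, for each $i\in[n]$, that the subgraph $F_i := c^{-1}(\{x_i,\neg x_i\})$ has $v_{0,i}$ and $v_{62,i}$ as its only vertices of degree~$1$ and every other $v_{k,i}$ of degree~$2$. A degree-sum then gives $|F_i|=62$, and ruling out cycles makes $F_i$ the single $v_{0,i}$--$v_{62,i}$ path spanning all $63$ vertices $v_{0,i},\ldots,v_{62,i}$. Note that by Claim~\ref{cl:struct2}, $F_i$ indeed lives on these $63$ vertices, and as a union of two matchings it already has maximum degree~$2$.

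I would first use the edges whose list is exactly $\{x_i,\neg x_i\}$, all of which are forced into $F_i$; these immediately handle $v_{0,i}$, $v_{62,i}$ and every $v_{k,i}$ that is not inside a clause-verifying gadget containing $x_i$. Inside a gadget $G_r$ with $x_i\in C\in\Cc_r$, the matching argument at $v_{6r+2,i}$ and $v_{6r+4,i}$ is elementary: the back edge is forced and uses one of $x_i,\neg x_i$, while the two outgoing edges (one forward, one jumping to $v_{6r+6,i}$) share the list $\{x_i,\neg x_i,a_i\}$ (respectively $\{x_i,\neg x_i,b_i\}$), so they must split the two remaining distinct colours $\neg x_i$ and $a_i$ (resp.\ $b_i$). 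Hence exactly one of the two outgoing edges lies in $F_i$ and the degree at $v_{6r+2,i}$, $v_{6r+4,i}$ equals~$2$.

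The main difficulty --- which I expect to be the crux of the proof --- is degree~$2$ at $v_{6r+6,i}$ and $v_{6r+7,i}$, since constraints local to a single variable admit configurations in which $F_i$ prematurely terminates at $v_{6r+6,i}$. I would resolve this by coupling the analyses of the three variables $x_{i_1},x_{i_2},x_{i_3}$ of the clause $C$ at the shared vertex $z_{6r+7,C}=v_{6r+7,i_1}=v_{6r+7,i_2}=v_{6r+7,i_3}$. Writing $a_6^{(j)}, l_{37}^{(j)}, l_{57}^{(j)}$ for the indicators that the edges $v_{6r+6,i_j}v_{6r+7,i_j}$, $z_{6r+3,C}z_{6r+7,C}$, $z_{6r+5,C}z_{6r+7,C}$ respectively lie in $F_{i_j}$, three observations combine: matching at $z_{6r+7,C}$ gives $a_6^{(j)}+l_{37}^{(j)}+l_{57}^{(j)}\le 1$ for every $j$; pigeonhole on the two shared auxiliary colours $c_C,d_C$ at $z_{6r+7,C}$ forces $\sum_j a_6^{(j)}\ge 1$; and each of the two long edges is coloured with exactly one literal, so $\sum_j l_{37}^{(j)}=\sum_j l_{57}^{(j)}=1$. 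Summing and comparing makes all three inequalities tight, so $v_{6r+7,i_j}$ has degree~$2$; propagating back through the gadget using the already established identities $a_2+b_a=1$, $a_4+b_b=1$ leaves exactly three consistent configurations per variable, each of which is a simple $v_{6r+1,i}$--$v_{6r+7,i}$ path of length~$6$ through $G_r$.

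This both rules out cycles and completes the proof that $F_i=P_i$ is a single path from $v_{0,i}$ to $v_{62,i}$. The parity statement is immediate: since each of the $10$ gadgets contributes exactly $6$ edges to $P_i$, the edge $v_{6r+6,i}v_{6r+7,i}$, whenever it belongs to $P_i$, is the last of the $6$ edges inside $G_r$, and is therefore preceded by exactly $1+6r+5=6r+6$ edges of $P_i$ (one for $v_{0,i}v_{1,i}$ and six per earlier gadget) --- an even number, as required.
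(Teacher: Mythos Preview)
Your proof is correct and rests on the same structural ingredients as the paper's argument: the forced edges with list $\{x_i,\neg x_i\}$, the auxiliary colours $a_i,b_i$ pinning down the split at $v_{6r+2,i}$ and $v_{6r+4,i}$, the pigeonhole on $c_C,d_C$ at $z_{6r+7,C}$, and the coupling of the three variables through the two long edges $z_{6r+3,C}z_{6r+7,C}$ and $z_{6r+5,C}z_{6r+7,C}$. The only real difference is organizational: where the paper argues by a three-way case split on which (if any) of the long edges lands in $E_i$ and then chases implications edge by edge, you set up indicator variables, sum the inequalities $a_6^{(j)}+l_{37}^{(j)}+l_{57}^{(j)}\le 1$ against $\sum_j a_6^{(j)}\ge 1$ and $\sum_j l_{37}^{(j)}=\sum_j l_{57}^{(j)}=1$, and read off tightness --- which is a cleaner bookkeeping of exactly the same deductions and recovers the same three local path shapes the paper enumerates explicitly.
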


   \begin{proof}
   Fix $i\in[n]$. For convenience, denote $E_i=c^{-1}(\{x_i,\neg x_i\})$. 
   By Invariant~\ref{inv:unique} there is exactly one edge between $L_0$ and $L_1$ that has $x_i$ or $\neg x_i$ on its list, namely $v_{0,i}v_{1,i}$.
   Similarly, there is exactly one edge between $L_{61}$ and $L_{62}$ that has $x_i$ or $\neg x_i$ on its list, namely $v_{61,i}v_{62,i}$.
   Since $L(v_{0,i}v_{1,i})=L(v_{61,i}v_{62,i})=\{x_i,\neg x_i\}$, we know that $v_{0,i}v_{1,i},v_{61,i}v_{62,i}\in E_i$, and these are the only edges of $E_i$ in $E(L_0, L_1) \cup E(L_{61}, L_{62})$.
   Observe that edges between non-consecutive layers never leave the clause verifying gadgets. 
   Hence, for the first part of the claim, it suffices to show that for every $r\in[10]$, the edges in $E_i \cap E(G_r)$ form a path between $v_{6r+1,i}$ and $v_{6r+7,i}$. 
   In fact, by Claim~\ref{cl:struct2} it suffices to show that $E_i \cap E(G_r)$ {\em contains} a path between $v_{6r+1,i}$ and $v_{6r+7,i}$ that visits all the vertices $\{v_{6r+j,i}\mid j=1,\ldots,7\}$.
   To this end, fix $r\in[10]$.

   First assume that $x_i$ does not appear in any clause of $\Cc_r$.
   Then $G_r$ contains the path $v_{6r+1,i},v_{6r+2,i},\ldots, v_{6r+7,i}$, where each edge has the list $\{x_i,\neg x_i\}$.
   It immediately implies that all edges of this path are in $E_i \cap E(G_r)$, as required.

   Now let us assume that $x_i$ appears in a clause $C\in\Cc_r$.
   Let $C=\{\ell_i, \ell_j, \ell_k\}$ and assume that the literal $\ell_i$ contains $x_i$, the literal $\ell_j$ contains a variable $x_j$, and the literal $\ell_k$ contains a variable $x_k$.
   Observe that for $j=1,3,5$ we have $v_{6r+j,i}v_{6r+j+1,i} \in E_i$ because these edges have their lists equal to $\{x_i,\neg x_i\}$.
   Note also that $\Delta(E_i)\le 2$ because $E_i$ is a union of two matchings (colors).
   We consider three subcases.
   
   \begin{enumerate}
    \item Assume $v_{6r+3,i}v_{6r+7,i} \in E_i$.
          Since $\Delta(E_i)\le 2$ and $v_{6r+3,i}v_{6r+4,i} \in E_i$ we know that $v_{6r+2,i}v_{6r+3,i}\not\in E_i$, and as a consequence, $c(v_{6r+2,i}v_{6r+3,i})=a_{i}$. Hence $c(v_{6r+2,i}v_{6r+6,i})\ne a_{i}$, which implies that $v_{6r+2,i}v_{6r+6,i}\in E_i$.
          Then, since $\Delta(E_i)\le 2$ and $v_{6r+5,i}v_{6r+6,i} \in E_i$ we know that $v_{6r+4,i}v_{6r+6,i}\not\in E_i$, and as a consequence, $c(v_{6r+4,i}v_{6r+6,i})=b_{i}$. Hence $c(v_{6r+4,i}v_{6r+5,i})\ne b_{i}$, which implies that $v_{6r+4,i}v_{6r+5,i}\in E_i$.
          Thus, we have shown that $E_i$ contains the path $v_{6r+1,i},v_{6r+2,i},v_{6r+6,i},v_{6r+5,i},v_{6r+4,i},v_{6r+3,i},v_{6r+7,i}$, as required.
    \item Assume $v_{6r+5,i}v_{6r+7,i} \in E_i$.
          Since $\Delta(E_i)\le 2$ and $v_{6r+5,i}v_{6r+6,i} \in E_i$ we know that $v_{6r+4,i}v_{6r+5,i}\not\in E_i$, and as a consequence, $c(v_{6r+4,i}v_{6r+5,i})=b_{i}$. Hence $c(v_{6r+4,i}v_{6r+6,i})\ne b_{i}$, which implies that $v_{6r+4,i}v_{6r+6,i}\in E_i$.
          Then, since $\Delta(E_i)\le 2$ and $v_{6r+5,i}v_{6r+6,i} \in E_i$ we know that $v_{6r+2,i}v_{6r+6,i}\not\in E_i$, and as a consequence, $c(v_{6r+2,i}v_{6r+6,i})=a_{i}$. Hence $c(v_{6r+2,i}v_{6r+3,i})\ne a_{i}$, which implies that $v_{6r+2,i}v_{6r+3,i}\in E_i$.
          Thus, we have shown that $E_i$ contains the path $v_{6r+1,i},v_{6r+2,i},v_{6r+3,i},v_{6r+4,i},v_{6r+6,i},v_{6r+5,i},v_{6r+7,i}$, as required.
    \item Assume $v_{6r+3,i}v_{6r+7,i},v_{6r+5,i}v_{6r+7,i} \not\in E_i$.
          Since $L(v_{6r+3,i}v_{6r+7,i}) = L(v_{6r+5,i}v_{6r+7,i}) = \{x_i, \neg x_i, x_j, \neg x_j, x_k, \neg x_k\}$
	  we infer that $v_{6r+3,i}v_{6r+7,i},$ $v_{6r+5,i}v_{6r+7,i} \in E_j \cup E_k$.
          By Claim~\ref{cl:struct1} we know that $v_{6r+7,i}=v_{6r+7,j}=v_{6r+7,k}$, $v_{6r+7,i}v_{6r+8,j}\in E_j$ and $v_{6r+7,i}v_{6r+8,k}\in E_k$.
          Since $\Delta(E_j)\le 2$ and $\Delta(E_k)\le 2$, we get that $v_{6r+3,i}v_{6r+7,i}\in E_j$ and $v_{6r+5,i}v_{6r+7,i} \in E_k$ or vice versa.
          In any case, $v_{6k+6,j},v_{6k+7,i}\not\in E_j$, and $v_{6k+6,k},v_{6k+7,i}\not\in E_k$.
          Recall that $L(v_{6k+6,j},v_{6k+7,i}) = \{\ell_j,c_C,d_C\}$ and $L(v_{6k+6,k},v_{6k+7,i}) = \{\ell_k,c_C,d_C\}$.
          It follows that $c(\{v_{6k+6,j}v_{6k+7,i}, v_{6k+6,k}v_{6k+7,i}\})=\{c_C,d_C\}$.
          Then $c(v_{6k+6,i},v_{6k+7,i})\not\in\{c_C,d_C\}$.
          Since $L(v_{6k+6,i},v_{6k+7,i}) = \{\ell_i,c_C,d_C\}$, we get that $v_{6k+6,i},v_{6k+7,i} \in E_i$.
          Then, since $\Delta(E_i)\le 2$ and $v_{6r+5,i}v_{6r+6,i} \in E_i$ we know that $v_{6r+2,i}v_{6r+6,i},v_{6r+4,i}v_{6r+6,i}\not\in E_i$, and as a consequence, $c(v_{6r+2,i}v_{6r+6,i})=a_{i}$ and $c(v_{6r+4,i}v_{6r+6,i})=b_{i}$. 
          Hence $c(v_{6r+2,i}v_{6r+3,i})\ne a_{i}$, and $c(v_{6r+4,i}v_{6r+5,i})\ne b_{i}$
	  which implies that $v_{6r+2,i}v_{6r+3,i},v_{6r+4,i}v_{6r+5,i}$ $\in E_i$.
          Thus, we have shown that $E_i$ contains the path $v_{6r+1,i},v_{6r+2,i},v_{6r+3,i},$ $v_{6r+4,i},v_{6r+5,i},v_{6r+6,i},v_{6r+7,i}$, as required.
   \end{enumerate}
   For the second part of the claim recall that $P_i$ decomposes into an edge from $L_0$ to $L_1$, 10 paths of length 6 inside the gadgets and an edge from $L_{61}$ to $L_{62}$. Moreover, if $P_i$ contains an edge $v_{6r+6,i}v_{6r+7,i}$ for some $r\in[10]$, then this edge is the last edge of one of the 10 paths of length 6. It follows that it is preceded by $1+6r+5$ edges, which is an even number. 
      \end{proof}

 \subsection{Equivalence}  
 
   Assume $c$ is a list edge coloring of $(G,L)$. Define a boolean assignment $f:\vrb(\varphi)\rightarrow\{T,F\}$ by setting $x_i$ to $T$ iff the first edge of the path $P_i$ from Claim~\ref{claim:variable-path-simple} is colored by $x_i$. 
   Note that $P_i$ is colored alternately with $x_i$ and $\neg x_i$ and every odd edge on $P_i$ (i.e., preceded by an even number of edges) is colored with a satisfied literal.
   Now consider an arbitrary clause $C$. Let $r=g(C)$.
   Let $C=\{\ell_1,\ell_2,\ell_3\}$ and let $x_{i_j}$ be the variable from the literal of $\ell_j$, for $j=1,2,3$.
   By construction, there are three edges $v_{6r+6,i_j}z_{6r+7,C}$, for $j=1,2,3$ with $L(v_{6r+6,i_j}z_{6r+7,C})=\{\ell_j,c_C, d_C\}$.
   At most two of these edges are colored with $c_C$ or $d_C$, so there is $j=1,2,3$ such that $c(v_{6r+6,i_j}z_{6r+7,C}) = \ell_j$.
   In particular, $v_{6r+6,i_j}z_{6r+7,C} \in c^{-1}(\{x_{i_j},\neg x_{i_j}\})$ and hence, by Claim~\ref{claim:variable-path-simple} we know that $v_{6r+6,i_j}z_{6r+7,C}\in P_{i_j}$. 
   However, by the second part of Claim~\ref{claim:variable-path-simple} this edge is preceded by an even number of edges on $P_{i_j}$. 
   It follows that $\ell_j$ is satisfied.

   Assume $\varphi$ is satisfiable and let $f:\vrb(\varphi)\rightarrow\{T,F\}$ be a satisfying assignment.
   We define a list edge coloring $c$ of $(G,L)$ as follows.
   Consider any edge $e\in E(L_0,L_1)$. 
   Then $L(e)=\{x_i,\neg x_i\}$.
   We color $e$ with $x_i$ when $f(x_i)=T$ and with $\neg x_i$ otherwise.
   Now consider any edge $e\in E(L_{61},L_{62})$. 
   Again $L(e)=\{x_i,\neg x_i\}$.
   We color $e$ with $x_i$ when $f(x_i)=F$ and with $\neg x_i$ otherwise.
   By Invariant~\ref{inv:unique} incident edges get different colors in the partial coloring described so far.
   In what follows we describe $c|_{E(G_r)}$ for every $r\in[10]$ separately.
   Fix $r\in [10]$.
   
   Consider an arbitrary clause $C\in\Cc_r$.
   Let $C=\{\ell_1,\ell_2,\ell_3\}$ and let $x_{i_j}$ be the variable from the literal of $\ell_j$, for $j=1,2,3$.
   Since $\varphi$ is satisfied by $f$, at least one literal of $C$ is satisfied by $f$, by symmetry we can assume it is $\ell_1$.
   Consider the three edge disjoint paths
   \begin{align*}
      R_1 &=v_{6r+1,i_1},v_{6r+2,i_1},v_{6r+3,i_1}, v_{6r+4,i_1}, v_{6r+5,i_1},v_{6r+6,i_1},v_{6r+7,i_1}, \\
      R_2 & =v_{6r+1,i_2},v_{6r+2,i_2},v_{6r+6,i_2},v_{6r+5,i_2},v_{6r+4,i_2},v_{6r+3,i_2},v_{6r+7,i_2}, \\
      R_3 &= v_{6r+1,i_3},v_{6r+2,i_3},v_{6r+3,i_3},v_{6r+4,i_3},v_{6r+6,i_3},v_{6r+5,i_3},v_{6r+7,i_3}.
    \end{align*}
   For each $j=1,2,3$ the path $R_j$ is colored by $x_{i_j}$ and $\neg x_{i_j}$ alternately, beginning with $\neg x_{i_j}$ if $f(x_{i_j})=T$ and with $x_{i_j}$ if $f(x_{i_j})=F$.
   Note that edges of $R_1$, $R_2$ and $R_3$ are colored by colors from their lists. 
   Indeed, this is obvious for every edge apart from $v_{6r+6,i_1},v_{6r+7,i_1}$, because their lists contain $\{x_{i_j},\neg x_{i_j}\}$.
   Edge $v_{6r+6,i_1},v_{6r+7,i_1}$ is colored with $x_{i_j}$ if $f(x_{i_j})=T$ and with $\neg x_{i_j}$ if $f(x_{i_j})=F$.
   It follows that $v_{6r+6,i_1},v_{6r+7,i_1}$ is colored with the literal from $\{x_{i_1},\neg x_{i_1}\}$ which is satisfied by $f$, hence it is colored by $\ell_1$, and $\ell_1\in L(v_{6r+6,i_1},v_{6r+7,i_1})$, as required.   
   Finally, we put $c(v_{6r+2,i_1}v_{6r+6,i_1})=a_{i_1}$, $c(v_{6r+4,i_1}v_{6r+6,i_1})=b_{i_1}$, 
                   $c(v_{6r+2,i_2}v_{6r+3,i_2})=a_{i_2}$, $c(v_{6r+4,i_2}v_{6r+6,i_2})=b_{i_2}$,
                   $c(v_{6r+2,i_3}v_{6r+6,i_3})=a_{i_3}$, $c(v_{6r+4,i_3}v_{6r+5,i_3})=b_{i_3}$,
                   $c(v_{6r+6,i_2}v_{6r+7,i_2})=c_C$, 
                   $c(v_{6r+6,i_3}v_{6r+7,i_3})=d_C$.
   Thus we have colored all edges of $G_r$ which have lists containing a variable from $C$.
   
   Now consider any variable $x_i$ that does not appear in any clause of $\Cc_r$.
   Consider the path $v_{6r+1,i},v_{6r+2,i},\ldots, v_{6r+7,i}$.
   If $f(x_i)=T$, color the path with the sequence of colors $\neg x_i,x_i,\neg x_i,\ldots, x_i$, and otherwise with the sequence of colors $x_i,\neg x_i,x_i,\ldots, \neg x_i$. 
   
   Thus we have colored all the edges of $G_r$. It is straightforward to check that for every $r\in[10]$ the subgraph $G_r$ is colored properly.
   It remains to show that vertices in the layers $L_i$ for $i\equiv 1 \pmod 6$ are not incident to two edges of the same color.
   Clearly, this cannot happen for colors $a_j$ or $b_j$ for any $j\in[n]$, because they are not present on lists of edges incident to $L_i$ for $i\equiv 1 \pmod 6$.
   Also, it cannot happen for colors $c_C$ or $d_C$ for any clause $C$, because edges with these colors on their list only join $L_{i-1}$ with $L_{i}$ for $i\equiv 1 \pmod 6$, so two incident edges colored with $c_C$ or $d_C$ cannot belong to different gadgets.
   Finally, consider colors $\{x_i,\neg x_i\}$ for a fixed $i\in [n]$.
   The edges with these colors form a path of length 62, starting with $v_{0,i}v_{1,i}$, and continued as follows.
   The edge $v_{0,i}v_{1,i}$ is followed by 10 paths of length 6.
   For every $r\in[10]$, the $r$-th path of length 10 begins in $v_{6r+1,i}$ and ends in $v_{6r+7,i}=v_{6(r+1)+1,i}$.
   Finally, the 62-path ends with edge $v_{61,i}v_{62,i}$.
   Note that $v_{0,i}v_{1,i}$ is colored with the satisfied literal.
   Next, for every $r\in[10]$, the first edge of the $r$-th 10-path is colored with the non-satisfied literal and its last edge is colored by the satisfied literal.
   Finally, $v_{61,i}v_{62,i}$ is colored with the non-satisfied literal. 
   It follows that the 62-path of all edges with colors from $\{x_i,\neg x_i\}$ is colored alternately in $x_i$ and $\neg x_i$, as required.
   This finishes the proof that $c$ is a list edge coloring of $(G,L)$, and the proof of Lemma~\ref{lem:reduction-simple}.

\subsection{Proof of Theorem~\ref{th:simple}}   
Theorem~\ref{th:simple} follows immediately from Lemma~\ref{lem:reduction-simple} and Corollary~\ref{cor:eth-3,4-sat}.
Indeed, if there is an algorithm $A$ which solves \probLECSim in time $2^{o(|V(G)|^2)}$, then by Lemma~\ref{lem:reduction-simple} an $n$-variable instance of \probTFSAT can be transformed to a $O(\sqrt{n})$-vertex instance of \probLECSim in polynomial time and next solved in time $2^{o(n)}$ using $A$, which contradicts ETH by Corollary~\ref{cor:eth-3,4-sat}.

\section{Conclusions and further research}

In this work we have  shown that \probLECSim does not admit an algorithm in time $2^{o(n^2)}$, unless ETH fails.
This has consequences for designing algorithms for \probEC: in order to break the barrier $2^{O(n^2)}$ one has to use methods that exploit symmetries between colors, and in particular do not apply to the list version.
On the other hand, one may hope that our reductions can inspire a reduction to \probEC which would exclude at least a $2^{O(n)}$-time algorithm.
However it seems that \probEC requires a significantly different approach.
In our reductions we were able to encode information (namely, the boolean value of a variable in a satisfying assignment) in a {\em color} of an edge.
In the case of \probEC this is not possible, because one can recolor any edge $e$ by choosing an arbitrary different color $c'$ and swapping $c'$ and the color $c$ of $e$ on the maximal path/cycle that contains $e$ and has edges colored with $c$ and $c'$ only.

\bibliographystyle{abbrv}
\bibliography{listedgecol}

\end{document}